\documentclass[numsec, webpdf, modern, medium]{oup-authoring-template}
\onecolumn % for one column layouts

\graphicspath{{Fig/}}

\theoremstyle{thmstyleone}%
\newtheorem{theorem}{Theorem}%  meant for continuous numbers
\newtheorem{proposition}[theorem]{Proposition}%
\theoremstyle{thmstyletwo}%
\theoremstyle{thmstylethree}%

\begin{document}

% \journaltitle{Journal Title Here}
% \DOI{DOI added during production}
% \copyrightyear{YEAR}
% \pubyear{YEAR}
% \vol{XX}
% \issue{x}
% \access{Published: Date added during production}
% \appnotes{Paper}

\journaltitle{ArXiv preprint}
\DOI{https://arxiv.org/abs/2604.21087}
\copyrightyear{2026}
\pubyear{2026}
\vol{XX}
\issue{x}
% \access{Published: Date added during production}
\appnotes{Original paper}

\firstpage{1}

%\subtitle{Subject Section}

\title[Model quality in football]{Model quality in football: Quantifying the quality of an Expected Threat model}

\author[1,$\ast$]{Koen van Arem\ORCID{0009-0004-7658-4580}}
\author[1]{Jakob S\"ohl\ORCID{0000-0002-0831-1714}}
\author[2]{Mirjam Bruinsma}
\author[1]{Geurt Jongbloed\ORCID{0000-0003-4708-5868}}
% \author[4]{Fifth Author\ORCID{0000-0000-0000-0000}}

\address[1]{\orgdiv{Delft Institute of Applied Mathematics}, \orgname{Delft University of Technology}, \orgaddress{\country{The Netherlands}}}
\address[2]{\orgdiv{Football Analytics}, \orgname{AFC Ajax}, \orgaddress{\country{The Netherlands}}}
% \address[3]{\orgdiv{Department}, \orgname{Organisation}, \orgaddress{\street{Street}, \postcode{Postcode}, \state{State}, \country{Country}}}
% \address[4]{\orgdiv{Department}, \orgname{Organisation}, \orgaddress{\street{Street}, \postcode{Postcode}, \state{State}, \country{Country}}}

\corresp[$\ast$]{Corresponding author. \href{email:k.w.vanarem@tudelft.nl}{k.w.vanarem@tudelft.nl}}

% \received{Date}{0}{Year}
% \revised{Date}{0}{Year}
% \accepted{Date}{0}{Year}

% \editor{Associate Editor: Name}

%\abstract{
%\textbf{Motivation:} .\\
%\textbf{Results:} .\\
%\textbf{Availability:} .\\
%\textbf{Contact:} \href{name@email.com}{name@email.com}\\
%\textbf{Supplementary information:} Supplementary data are available at \textit{Journal Name}
%online.}

\abstract{The recent growth in data availability in football has increased the risk of incorrect use of data-driven models, making guidelines on their validation and application necessary. The Expected Threat (xT) model is an accessible option for football organisations that start building in-house methods, yet little is known about how to assess its quality. The aim of this study is twofold: to examine how the model quality depends on the number of game states and the number of training points, and to translate these results into guidelines for constructing and applying the model. Using the Markov chain underlying the model, we perform theoretical analyses and simulations to study the estimation error. These show that the estimation error is approximately lognormal for a given number of training points and game states. Additionally, we combine the simulations with expert consultation to establish the estimation error beyond which player evaluations based on the Expected Threat model become unreliable for scouting applications. From this, we derive rules of thumb to ensure the quality of an Expected Threat model before application, and we illustrate through an example how a validated model can be applied in practice. Because the approach generalises to Expected Possession Value models, this paper illustrates a framework to systematically quantify model quality, despite the ground truth being unobservable in football analytics.} 

\keywords{Sports analytics, Association football/Soccer, Application guidelines, Error quantification, Possession Value Model, Markov chains, Data-driven decision-making, Football scouting}

% \keywords[Abbreviations]{abbreviation1, abbreviation2, abbreviation3, abbreviation4}
\keywords[Abbreviations]{xT: Expected Threat, xG: Expected Goal, EPV: Expected Possession Value}

% \otherabstract[Additional Abstract]{Use this element for elements such as Graphical abstract, Lay summary, Translated abstract etc. Que cum aut etum qui ium dolupta ssequia autati odis demporepe ad et es alit rem repudaerae min et volorum re volupta nobit volectur aut fuga.}

% \otherabstract[Graphical Abstract]{\colorbox{black!20}{\hbox to 0.97\textwidth{\vbox to 50pt{}}}}

% \boxedtext{Key Messages}{
% \begin{itemize}
% \item Key boxed text here.
% \item Key boxed text here.
% \item Key boxed text here.
% \end{itemize}}

\maketitle

%\begin{epigraph}
%Epigraph text. Ximporem qui reperov idempedit modio. Bisto imagnatem quae aceptis
%nobitae quid eum rae adignis quias-sit vellacc uptatur sunt quis rentis eaquasit alia deliquam
%rec-to consed unt. Empor sum ratur ressimusdae. Nam fugiae.
%\source{Epigraph source}
%\end{epigraph}

\section{Introduction}
Although the availability of data in (association) football has grown substantially over the last decade, there are few clubs that have in-house models to improve decision-making in their scouting or match analysis processes \citep{Iacono2025}. This, despite the potential impact ranging from improved talent identification to supporting tactical analyses \citep{Kholkine2026, Teixeira2025, RicoGonzlez2023}. For football clubs starting to build in-house models, accessible models can be more easily integrated into their processes. However, the use of easily accessible methods poses the risk of incorrect use of data-driven techniques \citep{OlthofDavis2025}. Post hoc argumentation and drawing conclusions that might be based on insignificant differences can be possible pitfalls. Consequently, it is important that these models are accompanied by guidelines on their validation and how to use them \citep{Iacono2025, OlthofDavis2025}.

A data-driven model that has generally been embraced by practitioners at football organisations is the Expected Goal model, often shortened as xG \citep{Robberechts2020, Anzer2021, Mead2023}. It models the probability of a shot being scored given a specific in-game situation. This can be used to evaluate shooting performance, goalkeeper performance, or scoreline analysis. The Expected Threat model \citep{Rudd2011} generalises the xG model. The Expected Threat (xT) model describes the probability of scoring a goal in the same possession chain given the current game state. In the traditional formulation, it models the progression of the ball during a possession chain because it defines the game state by the position of the ball during a possession. The Expected Threat model makes it possible to evaluate the quality of general open-play actions, whereas xG only considers actions related to shots. These action evaluations provide information for opponent analysis or player scouting. The framework uses procedures that are intuitive for practitioners \citep{VanRoy2020} and can be visualised in its traditional low-dimensional setting as illustrated by \citet{Singh2018}. 
This makes the Expected Threat model very accessible for football clubs starting to incorporate data into their decision processes.

Despite its accessibility, little is known about the quality of the model's estimates. Differences between model and the game of football in reality arise both from modelling assumptions and from the estimation process. In this paper, we focus on studying issues related to the latter. More specifically, it is unclear how the number of different game states, indicative of the model complexity, and the size of the training set influence the errors of the estimations by the model. Consequently, practitioners face challenges during model construction, application, and interpretation of the results. 

The aim of this paper is twofold: (1) to obtain insights into the dependence of the error of the Expected Threat model on the number of game states and the number of training points, and (2) to provide guidelines for practitioners to build and apply an Expected Threat model. 

To this end, we first give a mathematical description of the model. We then use this to find theoretical results on the error of the model, which provide a theoretical guaranteed error bound and understanding of the model behaviour. Simulations provide additional insights into the error of the model for a specified training sample size and number of game states. We find an approximate distribution of the estimation error, which suggests tighter, hypothesised bounds. Additionally, we perform simulations to find what size of the estimation error is acceptable for applications in player scouting. Finally, we use these results to provide guidelines to obtain a validated model and we illustrate how such a model can be applied to compare player performances at the 2020 European Championship.

\section{Expected Threat model}
\subsection{Theoretical framework}
We first discuss the traditional Expected Threat model to establish a foundation for the work in this paper. The Expected Threat, $xT(s)$, is the probability that a goal is scored in the current possession chain in football, given the current game state $s$. The concept was first introduced by \citet{Rudd2011} and later popularised by \citet{Singh2018}. In the traditional formulation, the game state is defined by the location of the ball during a possession. The possession chains in the game are described by a Markov chain. Within this framework, the probability of scoring given the current state is well-defined and can be estimated.

Because the aim in football is to score goals, a good pass or dribble increases the likelihood of scoring. Such an action $a_i=(s_{i-1}, s_i)$ moves the game from state $s_{i-1}$ to $s_i$. The quality of an on-the-ball action is then described in \eqref{eq: diff in xT} as the difference in Expected Threat before and after an action.
\begin{equation}\label{eq: diff in xT}
    \Delta xT(a_i) = xT(s_i) - xT(s_{i-1}), \qquad \text{where } a_i=(s_{i-1}, s_i).
\end{equation}

The Expected Threat model assumes that football adheres to the Markov assumption: distribution of the future states in a football game only depends on the current state and not on the past. When the game is in state $s$, a goal could be scored if the player with the ball takes a shot, and the shot is successfully scored. A goal can also be scored by transitioning to another game state $s'$ and scoring from that position. The probability of scoring from that position is again the Expected Threat at state $s'$: $xT(s')$. Consequently, the Expected Threat model should satisfy the equation as given in \eqref{eq: characteristic equation detailled}.
\begin{equation}\label{eq: characteristic equation detailled}
    xT(s) = P(shot|s)\cdot xG(s) + \sum_{s'\in S} T(s, s') \cdot xT(s'),
\end{equation}
where $xG(s)$ is the probability of scoring a goal given a shot from state $s$, also known as the extensively-studied Expected Goal (xG) value, $S$ is the set of all game states and $T(s, s')$ is the probability of the game transitioning from state $s$ to state $s'$. Alternatively, the $xT$ can be seen as the probability of the Markov chain being absorbed into the implicit game state of a goal being scored.

When the set of game states $S$ is of finite size $M$, the Expected Threat model allows for an alternative notation with vectors and matrices. To do this, the Expected Threat is denoted as $xT\in [0,1]^M$. The probability of scoring a goal can be contained in one vector $g\in [0,1]^M$, which is element-wise defined as $g(s) = P(shot|s)\cdot xG(s)$. The transition matrix $T\in[0,1]^{M\times M}$ gives the probabilities from each in-game state to each other in-game state. The defining equation can then be reformulated as 
\begin{equation}\label{eq: characteristic equation vectorized}
    xT = g + T\cdot xT.
\end{equation}

\subsection{Estimation}\label{subsec: xT model: estimation}
In practice, the quantities in \eqref{eq: characteristic equation vectorized} are unknown and have to be estimated. In recent years, the quantity of collected data in football has increased, resulting in often proprietary data sets \citep{Rein2016, Herold2019, OlthofDavis2025}. With historical data of in-game actions, the values of $P(shot|s)$, $xG(s)$, $T_{s,s'}$,  or the vector $g$ and matrix $T$ can be estimated. This is generally done by counting occurrences and taking the empirical mean.

The only remaining unknowns are the $xT$-values, which are solutions to a linear system of equations described by \eqref{eq: characteristic equation vectorized}. This can be solved with various methods, but a specific version of the Value Iteration algorithm \citep{Bransen2025} is used that is intuitive for coaches and players. 

This Value Iteration algorithm calculates the probability of scoring within the next $k$ actions. Scoring a goal within one action can only be done by directly shooting and scoring successfully. The initial value is therefore set as $P(shot|s)\cdot xG(s)$ or $xT^{(1)} = g$. Scoring within $k$ actions is possible by scoring directly or moving the ball and then scoring a goal within $k-1$ actions. This allows the recursive calculation of the probability of scoring within $k$ actions, and this algorithm is summarised in Algorithm~\ref{alg: iterative algorithm vectorized}. The values of this algorithm converge to the correct values when $k\to\infty$, as described in Proposition~\ref{prop: truncation error vectorized}.
\begin{algorithm}[hbt!]
    \caption{Vectorised Value Iteration algorithm for calculating $xT$ values}
    \label{alg: iterative algorithm vectorized}
    
    \begin{algorithmic}[1]
        \State \textbf{Input:} Convergence threshold $\varepsilon > 0$
        \State \textbf{Known values:} $g \in [0,1]^M$, $T \in [0,1]^{M \times M}$
        \State \textbf{Unknown:} $xT \in [0,1]^M$
        \State $xT^{(1)} \gets g$, $k \gets 1$
        \While{$\| xT^{(k)} - xT^{(k-1)} \| > \varepsilon$}
            \State $xT^{(k+1)} \gets g + T\cdot xT^{(k)}$
            \State $k \gets k + 1$
        \EndWhile
        \State \Return $xT^{(k)}$
    \end{algorithmic}
\end{algorithm}

\subsection{Definition of game state}\label{subsec: definition of game state}
An important aspect of the Expected Threat model is how the game states are defined. Traditionally, the field is discretised and divided into different rectangles. The rectangle in which the ball-possessing player is located during a possession chain defines the game state \citep{Singh2018}. A common discretisation of the pitch is a $16\times12$ grid. This grid roughly follows the ratio between the dimensions of a football pitch, and it results in a model with $M=16\cdot 12=192$ game states. In this way, the discretisation of the pitch in two dimensions determines the game states of the model.

The discretisation grid is a tuning parameter of the Expected Threat model. A model with a finer grid is better able to distinguish between different in-game situations and could be considered favourable. On the other hand, a finer grid results in more probabilities to estimate as the sizes of $g$ and $T$ grow. With a fixed size of the training data set, this decreases the number of data points that can be used to estimate a single probability and reduces the quality of the estimation. As a result, the choice of the grid is a model tuning parameter that strongly influences the model precision and quality of the estimates.

The traditional approach of defining game states solely by the ball-possessing player's positions does not capture the full game dynamics that influence scoring probability \citep{VanRoy2020}. \citet{vanArem2024} extended the traditional model by adding two extra dimensions: an indicator for a high or low ball and the number of defenders between the player and the goal. Because this rapidly increases the number of game states, they estimated the xG-values via an existing xG-model and used kernel density estimation to prevent overfitting of the transition matrix. \citet{HassaniRamdaniLofti2025} added the xG-value as an extra dimension to the Expected Threat model. Based on the expected behaviour of football players, they assumed extra structure on the transition probabilities. In this way, the Expected Threat model can be adjusted and extended to better capture game dynamics.

In a broader sense, the Expected Threat model is an Expected Possession Value (EPV) model. EPV models are often studied with continuous state spaces, and scoring probabilities are generally estimated through neural networks such as (convolutional) neural networks \citep{Fernandez2019, Fernandez2021}, graph neural networks \citep{Stockl2021}, or autoencoder-decoders \citep{Overmeer2025}. These models can be used to assess decision-making \citep{VanRoy2023, Pulis2022, Rahimian2024} and assess defensive quality \citep{Kim2026}. Hereby, they can provide extra insights for clubs with a dedicated data department.

However, these more complex approaches require tracking data, which is less widely available. This makes it harder to apply those models in scouting processes, where tracking data is less accessible. Additionally, the described methods perform smoothing or regularisation to deal with the increased number of the game states, which complicates the mathematical analysis of the estimation error. Therefore, this paper focuses on the traditional formulation of the Expected Threat model, although most of it is applicable to the more general class of EPV models.

\subsection{Quantifying the model quality}\label{subsec: quantification of estimation error}
The Expected Threat model aims to approximate the game of football as closely as possible. There are two factors that may cause a difference between model and reality.
First, the model makes assumptions such as the Markov assumption and the definition of the game states. The game state may ignore relevant information such as player locations, the threat on the field changes gradually instead of with few discrete steps, players are influenced by the past, albeit slightly,. This causes a discrepancy due to the modelling assumptions.

Second, an error is made within the context of the assumed model. Estimating $\hat{g}$ and $\hat{T}$ and applying Algorithm~\ref{alg: iterative algorithm vectorized} can introduce deviations from the true values. Because this error arises from the estimation process within the context of the assumed model, we refer to it as the estimation error.
Both influence how closely the Expected Threat model describes reality. Although we acknowledge that a finer grid decreases the influence of the model assumptions, the focus of this paper is to study the estimation error, since little is known about this aspect.

To study the size of the estimation error, we assume that there exist true $xT$-values for each set of game states $S$ of size $M$. These values are the actual probabilities of scoring a goal in each defined game state. When training an Expected Threat model, these quantities are estimated by taking empirical means and applying Algorithm~\ref{alg: iterative algorithm vectorized}. The resulting estimation error is $xT-\widehat{xT}^{(k)}\in[0,1]^M$, where $xT$ represents the true value and $\widehat{xT}^{(k)}$ the estimated value after $k$ iterations. The size of this error can then be measured with any vector norm: $\|xT-\widehat{xT}^{(k)}\|$.

In the context of this paper, the $\ell_\infty$-norm, defined as $\|xT-\widehat{xT}^{(k)}\|_\infty = \max_{s\in S}|xT(s)-\widehat{xT}^{(k)}(s)|$, is a suitable candidate for three reasons. First, this norm does not require rescaling to compare models among different number of game states $M$. Second, the probability of scoring a goal is close to 0 far away from the goal, and the error in estimating these $xT$-values is low in practice. By taking the maximal values, many small errors do not deflate the measured error when significant errors are made at critical states close to the goal. Finally, the $\ell_\infty$-norm provides an intuitive way of measuring the error of the estimated values as it describes the maximal error for each single state. This gives practitioners a clear handle to interpret the margin of error when applying the model at football clubs. The size of the estimation error is therefore measured as $\|xT-\widehat{xT}^{(k)}\|_\infty$.

\section{Theoretical results}\label{sec: theoretical results}
The goal of training an Expected Threat model is to estimate the true underlying $xT$-values, but errors can arise in the estimation process. There are two parts of the process that can cause errors. First, the statistical estimation of the vector $g$ and in-game transition matrix $T$ causes an error. A second source is the numerical error created by Algorithm~\ref{alg: iterative algorithm vectorized}, which calculates the estimated $xT$-values from the estimated $g$ and $T$. This can be summarised as 
\begin{equation}\label{eq: split of the error}
    xT-\widehat{xT}^{(k)} = \left(xT - \widehat{xT}\right) + \left(\widehat{xT} - \widehat{xT}^{(k)}\right),
\end{equation}
where $xT$ denotes the vector with the true underlying values, $\widehat{xT}^{(k)}$ the estimated values after $k$ iterations through Algorithm~\ref{alg: iterative algorithm vectorized} based on a sample of size $N$, and $\widehat{xT}$ the statistical estimate without any error caused by Algorithm~\ref{alg: iterative algorithm vectorized}. 

To analyse the model, we introduce certain assumptions about the transition matrix and the estimation process. For transition matrices in Markov chains, it holds that $0\leq\|T\|_\infty\leq1$. Since the Expected Threat model contains two implicit non-game states, scoring a goal and losing possession, $T$ is a submatrix of the full transition matrix. Consequently, $\|T\|_\infty<1$ holds whenever every game state has nonzero probability of either outcome. As any action could have a bad execution, the assumption $\|T\|_\infty<1$ is reasonable. The same assumption on $\|\hat{T}\|_\infty$ is justified for large enough sample sizes since it estimates $T$. If it fails, the corresponding game state can be removed. By concatenating incoming and outgoing actions, an equivalent model satisfying the assumption is obtained.

Lastly, we consider a dataset of $N$ in-game actions, for instance, all actions of a season in a league. Let $p_g\approx0.02$ denote the proportion of shots, and let $\bar{N}_g=p_gN/M$ and $\bar{N}_T=(1-p_g)N/M$ denote the mean number of shots and ball-moving actions per game state, respectively. We assume both are uniformly divided over all game states, as this yields interpretable results while still capturing the property that the $\bar{N}_g$ and $\bar{N}_T$ grow proportionally with $N$. 

\subsection{Numerical error}\label{subsec: value iteration algorithm}
The term $\widehat{xT} - \widehat{xT}^{(k)}$ in \eqref{eq: split of the error} can be considered the numerical error caused by Algorithm~\ref{alg: iterative algorithm vectorized} after $k$ iterations. The following result gives a bound on this numerical error. Although it is formulated with the estimated quantities concerning $\widehat{xT}$, it also holds when replacing them by the true underlying values like $xT$ and $xT^{(k)}$. 

\begin{proposition}\label{prop: truncation error vectorized}
    Consider an estimated Expected Threat model as a Markov chain. Assume that $\|\hat{T}\|_\infty<1$. Then, the error of the Value Iteration algorithm of the Expected Threat model (Algorithm~\autoref{alg: iterative algorithm vectorized}) after $k$ iterations satisfies
    \begin{equation}\label{eq: bound truncation error iterative algorithm}
        \left\| \widehat{xT}-\widehat{xT}^{(k)}\right\|_\infty\leq \frac{\|\hat{g}\|_\infty\ \|\hat{T}\|_\infty^k }{1-\|\hat{T}\|_\infty}.
    \end{equation}
\end{proposition}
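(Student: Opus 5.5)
The plan is to identify $\widehat{xT}$ explicitly as the fixed point of the estimated system $\widehat{xT} = \hat g + \hat T\,\widehat{xT}$. The iterates of Algorithm~\ref{alg: iterative algorithm vectorized} are then partial sums of a Neumann series, so the numerical error is a tail of a geometric-type series that can be bounded termwise in $\|\cdot\|_\infty$.

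\textbf{Step 1: existence and series form of $\widehat{xT}$.} Since $\|\hat T\|_\infty<1$ and the $\ell_\infty$-induced norm is submultiplicative, we have $\|\hat T^j\|_\infty\le\|\hat T\|_\infty^j$. Hence the series $\sum_{j\ge0}\hat T^j$ converges absolutely. This shows that $I-\hat T$ is invertible with $(I-\hat T)^{-1}=\sum_{j\ge0}\hat T^j$. Consequently $\widehat{xT}=(I-\hat T)^{-1}\hat g=\sum_{j=0}^\infty \hat T^j\hat g$ is the unique solution of the estimated version of \eqref{eq: characteristic equation vectorized}.

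\textbf{Step 2: closed form of the iterates.} By induction on $k$, starting from $\widehat{xT}^{(1)}=\hat g$ and using $\widehat{xT}^{(k+1)}=\hat g+\hat T\,\widehat{xT}^{(k)}$, we get
\begin{equation*}
\widehat{xT}^{(k)}=\sum_{j=0}^{k-1}\hat T^j\hat g .
\end{equation*}
Subtracting this from Step 1 gives
\begin{equation*}
\widehat{xT}-\widehat{xT}^{(k)}=\sum_{j=k}^\infty \hat T^j\hat g=\hat T^k\,\widehat{xT}.
\end{equation*}

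\textbf{Step 3: the bound.} The triangle inequality and submultiplicativity give
\begin{equation*}
\|\widehat{xT}-\widehat{xT}^{(k)}\|_\infty\le\sum_{j\ge k}\|\hat T\|_\infty^j\|\hat g\|_\infty=\frac{\|\hat g\|_\infty\|\hat T\|_\infty^k}{1-\|\hat T\|_\infty},
\end{equation*}
which is exactly \eqref{eq: bound truncation error iterative algorithm}.

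\textbf{Main obstacle.} There is no real obstacle here. The point requiring care is the indexing convention: $\widehat{xT}^{(1)}=\hat g$ corresponds to $k$ terms of the series, which yields the exponent $k$ rather than $k+1$. One should also justify exchanging the infinite sum with the norm, which follows from the absolute convergence established in Step 1. The same argument applies verbatim to the true quantities $g$, $T$, $xT$ and $xT^{(k)}$.
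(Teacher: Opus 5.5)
Your proposal is correct and follows essentially the same route as the paper. Both write the iterates as partial sums $\sum_{\ell=0}^{k-1}\hat T^\ell\hat g$ of the Neumann series for $\widehat{xT}$, identify the error as the tail $\hat T^k\sum_{\ell\ge0}\hat T^\ell\hat g$, and bound that tail by a geometric series; your invertibility argument for $I-\hat T$ and the induction for the iterates simply make explicit steps the paper states without proof.
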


As a result of Proposition~\ref{prop: truncation error vectorized}, the error decreases geometrically at each iteration of Algorithm~\autoref{alg: iterative algorithm vectorized} with a factor $||\hat{T}||_\infty$. In practice, the transition matrix is sparse, and the algorithm can be implemented such that iterations are relatively cheap. Consequently, the number of iterations can generally be chosen such that the numerical error is negligible.

\subsection{Statistical Error}
Since the numerical error can be made arbitrarily small by performing sufficiently many iterations, the estimation error mainly arises from the statistical estimation of the quantities. The statistical error splits into the estimation error of the vector $g$ and the matrix $T$. A probabilistic bound on these statistical errors can be obtained via concentration inequalities. These insights can then be combined to find one probabilistic bound on the size of the statistical error.

\subsubsection{Split of statistical error}
The size of the statistical error can be split into a term representing the error in the goal probability and one representing the error in the transition probabilities. This is described in the following proposition.
\begin{proposition}\label{prop: split of statistical error}
    Let $xT$, $g$, and $T$ denote the quantities of the true Expected Threat model, and let $\widehat{xT}$, $\hat{g}$, and $\hat{T}$ denote the quantities of the estimated Expected Threat model. If $\|T\|_\infty<1$, then 
    \begin{equation}\label{eq: split of statistical error}
        \|xT - \widehat{xT}\|_\infty \leq \frac{\|g - \hat{g}\| _\infty+ \|(T - \hat{T})\widehat{xT}\|_\infty}{1 - \|T\|_\infty}.
    \end{equation}
\end{proposition}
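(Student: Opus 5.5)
We will work directly from the two fixed-point equations. The true model satisfies $xT = g + T\,xT$ by \eqref{eq: characteristic equation vectorized}. The estimated model satisfies $\widehat{xT} = \hat g + \hat T\,\widehat{xT}$, since $\widehat{xT}$ is by definition the exact statistical estimate (the limit of Algorithm~\ref{alg: iterative algorithm vectorized}). Subtracting the two, we add and subtract $T\,\widehat{xT}$:
\begin{equation*}
    xT - \widehat{xT} = (g-\hat g) + T\,xT - \hat T\,\widehat{xT} = (g-\hat g) + T\,(xT-\widehat{xT}) + (T-\hat T)\,\widehat{xT}.
\end{equation*}
We deliberately put the error on the \emph{true} matrix $T$ and the perturbation on the estimated vector $\widehat{xT}$. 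This choice is what makes only $\|T\|_\infty$ appear in the denominator, and it matches the hypothesis $\|T\|_\infty<1$ together with the term $\|(T-\hat T)\widehat{xT}\|_\infty$ in \eqref{eq: split of statistical error}.

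Next we take $\ell_\infty$-norms and apply the triangle inequality. We then use submultiplicativity of the induced matrix norm, $\|T(xT-\widehat{xT})\|_\infty\le \|T\|_\infty\,\|xT-\widehat{xT}\|_\infty$. This gives
\begin{equation*}
    \|xT-\widehat{xT}\|_\infty \le \|g-\hat g\|_\infty + \|T\|_\infty\,\|xT-\widehat{xT}\|_\infty + \|(T-\hat T)\widehat{xT}\|_\infty .
\end{equation*}
We move the middle term to the left-hand side and divide by $1-\|T\|_\infty>0$, which yields the claimed bound.

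The only point needing care is that the manipulation is legitimate: $\|xT-\widehat{xT}\|_\infty$ must be finite, i.e.\ both $xT$ and $\widehat{xT}$ must exist as solutions of their linear systems. For $xT$ this follows from $\|T\|_\infty<1$, since $I-T$ is then invertible via the Neumann series; this is the same geometric argument as in Proposition~\ref{prop: truncation error vectorized}. For $\widehat{xT}$ we use the standing assumption $\|\hat T\|_\infty<1$ from the beginning of this section. Alternatively, we can simply note that $\widehat{xT}\in[0,1]^M$ is a vector of absorption probabilities. Either way the difference is a finite vector, so the rearrangement step is valid.

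I do not expect a genuine obstacle. The main decision is which way to split the cross term: the alternative, $(T-\hat T)xT + \hat T(xT-\widehat{xT})$, would produce $\|\hat T\|_\infty$ in the denominator instead, which is not the stated form.
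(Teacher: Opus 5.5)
Your proposal is correct and follows the paper's proof: subtract the two fixed-point equations, write $T\,xT-\hat T\,\widehat{xT}=T(xT-\widehat{xT})+(T-\hat T)\widehat{xT}$, apply the triangle inequality and submultiplicativity, and absorb the $\|T\|_\infty$ term using $\|T\|_\infty<1$. Your remark on the existence of $xT$ and $\widehat{xT}$ covers well-definedness, which the paper leaves implicit.
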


The proposition shows that the statistical error consists of the error in the estimation of model parameters $g$ and $T$ scaled by the factor $\frac{1}{1-\|T\|_\infty}$. The error of the transition probabilities is propagated to the statistical error, but only after being weighted by the estimated $xT$-values of the corresponding states. Consequently, errors in transition probabilities are most problematic if they occur in states with a high estimated $xT$-value. 

The rescaling factor $\frac{1}{1-\|T\|_\infty}$ is large if $\|T\|_\infty\approx1$. This happens if a game state has a very small probability of both losing ball possession and scoring a goal. In such a situation, the bound will increase because the $xT$-values are difficult to estimate for this Markov chain. 

\subsubsection{Concentration inequalities}
The model parameters $g$ and $T$ are estimated via the empirical mean. Concentration inequalities can be used to derive bounds on the estimation error of both. This results in bounds in terms of the number of game states $M$ and the size of the training data set $N$. Combined, they provide a bound on the statistical error and can give insights into the impact of errors during the estimation of the model parameters $g$ and $T$.

The goal-scoring vector, $g$, gives the probability of scoring given ball possession in game state $s$. For each state, it holds that $P(goal|s)=P(goal|shot,s)\cdot P(shot|s)$. The estimate $\hat{g}$ is often obtained by estimating the probabilities of a shot being taken, and the shot being successful with the empirical mean. These are then multiplied to find the value $\hat{g}(s)$. Equivalently, one could estimate $g$ by taking the empirical mean of the goals being scored from all visits to a state. The following proposition gives a bound on the error of such estimation.

\begin{proposition}\label{prop: statistical error goal vector}
    Suppose that each entry of $g$ is estimated by the empirical mean of $\bar{N}_g$ independent Bernoulli random variables with expectation $g(s)$. Then, the bound 
    \begin{equation}
        \|g-\hat{g}\|_\infty < \sqrt{\frac{\log(2M/\alpha)}{2\bar{N}_g}} = M^{1/2}\sqrt{\frac{\log(2M/\alpha)}{2N}}
    \end{equation}
    holds at least with probability $1-\alpha$.
\end{proposition}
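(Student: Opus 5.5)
The plan is to apply Hoeffding's inequality to each coordinate separately and then combine the coordinates with a union bound over the $M$ game states.

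\emph{Step 1: one state.} Fix a state $s\in S$. By assumption, $\hat{g}(s)$ is the empirical mean of $\bar{N}_g$ independent Bernoulli variables with mean $g(s)$. These variables take values in $[0,1]$, so the two-sided Hoeffding inequality gives, for every $t>0$,
\begin{equation*}
    P\bigl(|\hat{g}(s)-g(s)|\geq t\bigr)\leq 2\exp\bigl(-2\bar{N}_g t^2\bigr).
\end{equation*}

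\emph{Step 2: all states.} The event $\|g-\hat{g}\|_\infty\geq t$ is the union over $s\in S$ of the events $|\hat{g}(s)-g(s)|\geq t$. A union bound over these $M$ events gives
\begin{equation*}
    P\bigl(\|g-\hat{g}\|_\infty\geq t\bigr)\leq 2M\exp\bigl(-2\bar{N}_g t^2\bigr).
\end{equation*}
This step needs no independence between different states. That is convenient, because in practice the coordinates of $\hat{g}$ come from the same sequences of possessions and may well be dependent.

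\emph{Step 3: choose $t$.} Set the right-hand side equal to $\alpha$ and solve, which gives $t=\sqrt{\log(2M/\alpha)/(2\bar{N}_g)}$. With this choice, $\|g-\hat{g}\|_\infty<t$ holds with probability at least $1-\alpha$. Note that Hoeffding controls the event $\geq t$, so its complement gives exactly the strict inequality in the statement.

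\emph{Step 4: rewrite in terms of $N$.} Substitute $\bar{N}_g=p_gN/M$ to get
\begin{equation*}
    t=M^{1/2}\sqrt{\frac{\log(2M/\alpha)}{2p_gN}}.
\end{equation*}
This matches the displayed right-hand side only up to the constant $p_g$. I will therefore read the second expression in the statement as written with the $p_g$ factor absorbed, or equivalently with $N$ denoting the number of shots. The proof will state this convention explicitly rather than silently drop $p_g$.

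\emph{Main obstacle.} None of the steps is technically hard. The only points needing care are the bookkeeping of $p_g$ in Step 4 and justifying the independence assumption within a single state, which the statement already grants as a hypothesis.
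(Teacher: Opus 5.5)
Your proof is correct and is essentially the paper's own argument: Hoeffding's inequality for each coordinate, a union bound over the $M$ states (no independence across states needed), and solving $2M\exp(-2\bar{N}_g t^2)=\alpha$ for $t$. The discrepancy you flag in Step 4 is real: substituting $\bar{N}_g=p_gN/M$ gives an extra factor $p_g^{-1/2}$ that the paper's proof passes over (``follows from the definition of $\bar{N}_g$''), but it reappears explicitly as $1/\sqrt{p_g}$ in Theorem~\ref{thm: estimation error}, so stating your convention explicitly is the right way to handle it.
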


Proposition~\ref{prop: statistical error goal vector} assumes that the observed transitions leaving each state $s$ are independent random variables. Note that this is required within each state only and independence between samples from different states is not assumed. Because the Markov assumption gives that different visits to a specific state are independent, this assumption holds for the Expected Threat model.

The transition matrix gives the probabilities of moving from one in-game state to another in-game state. These are estimated by taking the empirical mean for each individual transition probability. The following proposition gives a bound on the error of the estimated transition matrix. 

\begin{proposition}\label{prop: statistical error transition matrix}
    Suppose that each entry of $T$ is estimated by taking the empirical mean of $\bar{N}_T$ independent Bernoulli random variables with expectation $T({i,j})$. Then, the bound
    \begin{equation}
        \|T-\hat{T}\|_\infty < M\sqrt{\frac{\log(2M^2/\alpha)}{2\bar{N}_m}} = M^{3/2}\sqrt{\frac{\log(2M^2/\alpha)}{2N}}
    \end{equation}
    holds with probability at least $1-\alpha$.
\end{proposition}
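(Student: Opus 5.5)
The bound follows from Hoeffding's inequality applied entrywise, a union bound over all $M^2$ entries, and the conversion from an entrywise bound to the $\ell_\infty$ operator norm (maximum absolute row sum). The argument parallels that of Proposition~\ref{prop: statistical error goal vector}; only the number of estimated quantities and the passage from entries to the matrix norm change.

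Fix a pair of states $(i,j)$. By assumption $\hat{T}(i,j)$ is the empirical mean of $\bar{N}_T$ independent Bernoulli variables with mean $T(i,j)$, each bounded in $[0,1]$. Hoeffding's inequality therefore gives
\begin{equation*}
P\bigl(|T(i,j)-\hat{T}(i,j)|\geq t\bigr)\leq 2\exp(-2\bar{N}_T t^2).
\end{equation*}
I take $t=\sqrt{\log(2M^2/\alpha)/(2\bar{N}_T)}$, which makes the right-hand side equal to $\alpha/M^2$. A union bound over all $M^2$ pairs then shows that, with probability at least $1-\alpha$, every entry satisfies $|T(i,j)-\hat{T}(i,j)|<t$ simultaneously. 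No independence across different entries is needed, since the union bound does not require it. This is why only within-state independence of the samples, justified by the Markov assumption, is used.

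On that event, for each row $i$,
\begin{equation*}
\sum_{j}|T(i,j)-\hat{T}(i,j)|<Mt,
\end{equation*}
so $\|T-\hat{T}\|_\infty=\max_i\sum_j|T(i,j)-\hat{T}(i,j)|<Mt$. This is the first expression, reading the $\bar{N}_m$ in the statement as $\bar{N}_T$. For the second expression I substitute $\bar{N}_T=(1-p_g)N/M$. This gives
\begin{equation*}
Mt=M^{3/2}\sqrt{\frac{\log(2M^2/\alpha)}{2(1-p_g)N}}.
\end{equation*}
The displayed form $M^{3/2}\sqrt{\log(2M^2/\alpha)/(2N)}$ drops the factor $(1-p_g)^{-1/2}\approx 1.01$. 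The equality in the statement should therefore be read as an approximation, or as using $\bar{N}_T\approx N/M$. The same convention is needed for the equality in Proposition~\ref{prop: statistical error goal vector}, where substituting $\bar{N}_g=p_gN/M$ gives $M^{1/2}\sqrt{\log(2M/\alpha)/(2p_gN)}$ rather than the displayed $M^{1/2}\sqrt{\log(2M/\alpha)/(2N)}$; there the dropped factor is $p_g^{-1/2}\approx 7$, not close to one. I would state both conventions explicitly in the proof.

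No step is a real obstacle; the only point needing care is bookkeeping. The per-entry failure probability must be $\alpha/M^2$ rather than $\alpha/M$ so that the union bound over $M^2$ entries gives total failure probability at most $\alpha$. The $\ell_\infty$ norm is a row-sum norm, so the entrywise bound picks up an extra factor $M$, which yields the $M^{3/2}$ growth rather than $M^{1/2}$.
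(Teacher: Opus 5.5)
Your proof is correct and is essentially the paper's argument: Hoeffding's inequality for each entry at threshold $\varepsilon/M$, a union bound over the $M^2$ entries, and the row-sum estimate $\|T-\hat{T}\|_\infty\leq M\max_{i,j}|T(i,j)-\hat{T}(i,j)|$. Your bookkeeping remark is also justified. The paper only asserts that the equality ``follows from the definition of $\bar{N}_m$'', but with $\bar{N}_T=(1-p_g)N/M$ and $\bar{N}_g=p_gN/M$ the factors $(1-p_g)^{-1/2}$ and $p_g^{-1/2}$ are dropped, and these factors reappear explicitly in Theorem~\ref{thm: estimation error}.
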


Proposition~\ref{prop: statistical error goal vector} shows that the bound on the error in estimating $g$ grows with order $M^{1/2}\sqrt{\log(M)}$. According to Proposition~\ref{prop: statistical error transition matrix}, the error in estimating the transition matrix is bounded by a quantity growing with order $M^{3/2}\sqrt{\log(M)}$. The error in estimating the transition matrix will grow faster than the error for estimation of the goal-scoring probabilities in terms of $M$. 

The influence of $N$ shows the same behaviour for both model parameters. Both bounds decrease with an order of $1/\sqrt{N}$. As expected, the estimation of the model parameters is better for large values of $N$.

\subsubsection{Estimation error}\label{subsubsec: theoretical part estimation error}
The derived results can be combined to obtain one bound on the error of the estimated Expected Threat values. The following theorem is a direct consequence of the derived propositions. As discussed in the beginning of this section, most assumptions for this theorem are naturally satisfied during application. 

\begin{theorem}\label{thm: estimation error}
    Let $xT$, $g$, and $T$ denote the quantities of the true Expected Threat model with $M$ game states, and let $\widehat{xT}^{(k)}$, $\hat{g}$, and $\hat{T}$ denote the quantities of the estimated Expected Threat model with $k$ iterations of Algorithm~\autoref{alg: iterative algorithm vectorized}. Assume that $\|T\|_\infty<1$ and $\|\hat{T}\|_\infty<1$. Then, with probability $1-\alpha$, the error of the estimated model is bounded by
    \begin{equation}
    \begin{split}
        \left\|xT-\widehat{xT}^{(k)}\right\|_\infty &< \frac{1}{1-\|T\|_\infty}\color[HTML]{848482}
        \underbrace{\color{black}
            \left(\color[HTML]{848482}
            \overbrace{\color{black}\frac{1}{\sqrt{p_g}}M^{1/2}\sqrt{\frac{\log(2M/\alpha)}{2N}}}^{\text{Error } \hat{g}}
            \color{black}+
            \color[HTML]{848482}\overbrace{\color{black}\frac{1}{\sqrt{1-p_g}}M^{3/2}\sqrt{\frac{\log(2M^2/\alpha)}{2N}}}^{\text{Error } \hat{T}}
            \color{black}\right)
        }_{\text{Statistical error}}\\
        &+ 
        \color[HTML]{848482}\underbrace{\color{black}\frac{\|\hat{g}\|_\infty\ \|\hat{T}\|_\infty^k}{1-\|\hat{T}\|_\infty}.}_{\text{Numerical error}}
    \end{split}
    \end{equation}
\end{theorem}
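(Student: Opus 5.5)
The plan is to combine Proposition~\ref{prop: truncation error vectorized} for the numerical part with Proposition~\ref{prop: split of statistical error} for the statistical part. The delicate step is the probability bookkeeping. Simply invoking Propositions~\ref{prop: statistical error goal vector} and~\ref{prop: statistical error transition matrix}, each at level $\alpha$, and taking a union bound gives confidence only $1-2\alpha$. So I would redo the concentration step so that the two events jointly fail with probability at most $\alpha$. The idea is to use the large slack in the $\hat T$ term.

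\textbf{Step 1 (deterministic reduction).} By the triangle inequality applied to \eqref{eq: split of the error}, the numerical part $\|\widehat{xT}-\widehat{xT}^{(k)}\|_\infty$ is bounded directly by Proposition~\ref{prop: truncation error vectorized}, using $\|\hat T\|_\infty<1$. For the statistical part, Proposition~\ref{prop: split of statistical error} applies since $\|T\|_\infty<1$. Every entry of $\widehat{xT}$ is an absorption probability of the estimated chain, so $\widehat{xT}\in[0,1]^M$. Hence
\[
\|(T-\hat T)\widehat{xT}\|_\infty\le\max_s\sum_{s'}|T(s,s')-\hat T(s,s')|=\|T-\hat T\|_\infty .
\]
It therefore suffices to show that, with probability at least $1-\alpha$, both of the following hold simultaneously:
\[
\|g-\hat g\|_\infty+\|T-\hat T\|_\infty< A+B,
\qquad A=\sqrt{\tfrac{\log(2M/\alpha)}{2\bar N_g}},\quad B=M\sqrt{\tfrac{\log(2M^2/\alpha)}{2\bar N_T}}.
\]
Substituting $\bar N_g=p_gN/M$ and $\bar N_T=(1-p_g)N/M$ then turns $A$ and $B$ into exactly the displayed expressions with the factors $1/\sqrt{p_g}$ and $1/\sqrt{1-p_g}$.

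\textbf{Step 2 (sharing the failure budget).} Take the $g$-event at level $\alpha/2$. Hoeffding's inequality plus a union bound over the $M$ states gives $\|g-\hat g\|_\infty<A'$ with $A'=\sqrt{\log(4M/\alpha)/(2\bar N_g)}$, and $A'\le A+\sqrt{\log 2/(2\bar N_g)}$. For the matrix, I would not bound row sums entrywise as in Proposition~\ref{prop: statistical error transition matrix}. Instead I would treat each row of $\hat T$, together with the goal and loss columns, as a multinomial empirical distribution. The $\ell_1$ concentration bound (Bretagnolle--Huber--Carol) gives
\[
P\Bigl(\textstyle\sum_{s'}|T(s,s')-\hat T(s,s')|\ge\varepsilon\Bigr)\le 2^{M+2}e^{-\bar N_T\varepsilon^2/2}.
\]
Choosing $\varepsilon$ of order $\sqrt{M\log(8M/\alpha)/\bar N_T}$ and taking a union bound over rows yields $\|T-\hat T\|_\infty<B'$ with failure probability at most $\alpha/2$. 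Here $B'$ is of order $\sqrt M$ times a logarithmic factor over $\sqrt{\bar N_T}$, which is far below $B$ (order $M$).

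\textbf{Step 3 (absorbing the excess).} On the intersection of both events, which has probability at least $1-\alpha$, we get $\|g-\hat g\|_\infty+\|T-\hat T\|_\infty<A+\sqrt{\log2/(2\bar N_g)}+B'$. It remains to check that $\sqrt{\log2/(2\bar N_g)}+B'\le B$. Writing everything over $\sqrt N$, this reduces to an inequality in $M$, $p_g$ and $\alpha$ alone. For $p_g\approx0.02$, it holds once $M$ exceeds a small constant, because the dominant term $B$ carries the extra factor $M\sqrt{\log M}$. This final constant-chasing is the main obstacle. I expect the argument to go through for all grid sizes of practical interest, for example the $16\times12$ grid. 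Tiny $M$ may need a separate check or a slightly adjusted constant in the $\ell_1$ bound. Adding the numerical term from Step 1 then completes the stated bound.
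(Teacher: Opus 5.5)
Your Step 1 is exactly the paper's proof. The appendix chains Propositions~\ref{prop: truncation error vectorized}--\ref{prop: statistical error transition matrix} and uses $\|(T-\hat T)\widehat{xT}\|_\infty\le\|T-\hat T\|_\infty$, and it does not track failure probabilities at all. You are right that intersecting the events of Propositions~\ref{prop: statistical error goal vector} and~\ref{prop: statistical error transition matrix} only gives probability at least $1-2\alpha$. Equivalently, it gives the stated bound with $\alpha$ replaced by $\alpha/2$ inside both logarithms, and that is all the paper's argument supports. Your Steps 2--3 go further and try to recover the literal $1-\alpha$. That is where the proposal has a real gap.

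Multiply Step 3 through by $\sqrt{\bar N_T}$ and use $\bar N_T/\bar N_g=(1-p_g)/p_g$. With your Bretagnolle--Huber--Carol threshold at per-row level $\alpha/(2M)$, the inequality you need is
\[
\sqrt{\frac{1-p_g}{p_g}}\sqrt{\frac{\log 2}{2}}+\sqrt{2\Bigl((M+2)\log 2+\log\frac{2M}{\alpha}\Bigr)}\;\le\;M\sqrt{\frac{\log(2M^2/\alpha)}{2}} .
\]
The first term does not depend on $M$ and grows like $p_g^{-1/2}$. So the inequality fails for small $M$: with $p_g=0.02$ and $\alpha=0.05$ it already fails at $M=4$. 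It also fails, for every fixed $M$, once $p_g$ is small enough. The theorem places no restriction on $M$ or $p_g$, so this is more than constant-chasing.

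Some of the loss is self-inflicted. Using $\sqrt{x+y}\le\sqrt x+\sqrt y$ is wasteful, and an uneven split of the budget ($\alpha(1-\delta)$ for $\hat g$, $\alpha\delta$ for $\hat T$) reduces the sensitivity to $p_g$. But for very small grids no split between Hoeffding-type bounds can work. At $M=1$, Proposition~\ref{prop: statistical error transition matrix} already gives the bare Hoeffding threshold for the single entry, so there is no slack in the $\hat T$ term to absorb anything.

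There is also an added assumption. Bretagnolle--Huber--Carol requires each row of $\hat T$ to be the empirical distribution of $\bar N_T$ i.i.d.\ categorical draws. Proposition~\ref{prop: statistical error transition matrix}, and hence the theorem, only assumes that each entry is a mean of independent Bernoulli variables, which says nothing about the joint law within a row.

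So your argument proves the theorem under two extra hypotheses: multinomial rows and $M\ge M_0(p_g,\alpha)$. Without them, the honest conclusion is the $1-2\alpha$ version, which follows from your Step 1 together with Propositions~\ref{prop: statistical error goal vector} and~\ref{prop: statistical error transition matrix}.
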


The two terms corresponding to the errors of $\hat{g}$ and $\hat{T}$ can be combined into one by taking the highest order terms. As discussed in Section~\ref{subsec: value iteration algorithm}, the numerical error is negligible in typical situations and the last term of the bound can be ignored. The fact that $p_g \ll 1$ gives that $\frac{1}{\sqrt{1-p_g}}\approx 1$. These insights can be combined to find the following approximate bound with probability $1-\alpha$.
\begin{equation}
    \|xT-\widehat{xT}^{(k)}\|_\infty\ \lesssim\ \frac{p_g^{-1/2}+1}{1-\|T\|_\infty} M^{3/2}\sqrt{\frac{\log(2M^2/\alpha)}{2N}}.
\end{equation}
This bound shows that the error in terms of $N$ and $M$ is at most of order $O(M^{3/2}\sqrt{\log(M)}/\sqrt{N})$ with high probability. It shows that the estimation error is expected to grow when the number of game states $M$ is increased or the number of training data points $N$ is decreased.

However, the errors of transition probabilities and the goal probabilities have different multiplication factors. Because $p_g$ is small, the term with a factor $M^{1/2}\sqrt{\log(M)}$ term gets a higher multiplicative constant. For finite values of $M$, the error in goal-scoring probabilities $\|g-\hat{g}\|_\infty$ can be dominant instead of the error in the transition probabilities $\|T-\hat{T}\|_\infty$, which has a higher order. 

The interpretation of the game of football also provides extra nuance. States with a low threat tend to be visited more often. These low-threat states often have a low probability of scoring directly and many data points, leading to good estimation of the values $g$. Similarly, the error in estimating $T$ is weighted by $\widehat{xT}$. This means that low-threat game states contribute little to the estimation error. Because only a few game states are of high threat, most game states typically have small estimation errors.

Because many game states have small errors and the error in estimating $\hat{g}$ might be dominant for finite values of $M$ and $N$, we hypothesise that the error of the Expected Threat model is lower in typical applications. The orders might grow less fast than the worst-case $O(M^{3/2}\sqrt{\log(M)}/\sqrt{N})$. Although the theoretical results provide insights into the behaviour of the model, the typical behaviour of the estimation error during application is unknown.

\section{Simulations}\label{sec: Simulations}
Fortunately, the Markov chain underlying the Expected Threat model makes it possible to study the estimation error via simulation. Therefore, we use simulations to describe the estimation error realistically and complement the theoretical results.

Two types of simulations are performed. First, we perform a simulation study to obtain an approximate distribution of the estimation error given realistic values of both the training sample size $N$ and the number of game states $M$. Second, we combine expert consultation with simulations to find which estimation errors result in reliable player ratings for scouting purposes. These simulations provide information on the size of the estimation error and its influence on the quality of player ratings. With this, we will formulate rules of thumb to constrain the error of the Expected Threat model in the context of application to player scouting in Section~\ref{sec: Application}.

\subsection{Methods}\label{subsec: simulation distribution estimation error methods}
For a specific grid of size $M$, we train an Expected Threat model on a reasonably large set of event data. For the purpose of the analysis, these models are assumed to be the ground truth. 

\begin{figure}[h]
    \centering
    \includegraphics[width=0.95\linewidth]{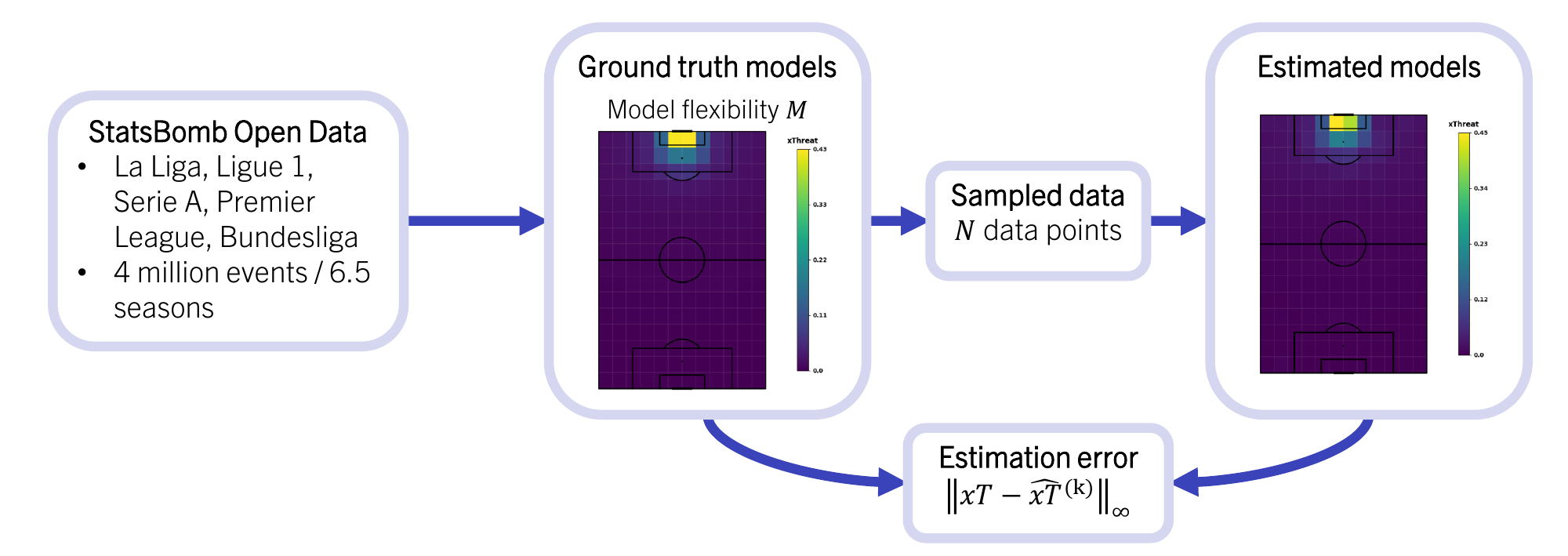}
    \caption{A visualisation of the simulation scheme used in the study.}
    \label{fig: simulation scheme}
\end{figure}

Based on the Markov chain behind the ground truth model, a new data set of size $N$ can be simulated by sampling different possession chains from the ground truth model. A new model with the same grid can be trained, which estimates the $xT$-values of the original model. As both the ground truth $xT$ and the estimated $\widehat{xT}$ are known, the size of the estimation error can be calculated for the estimated model. The simulated estimation error is a bootstrap approximation of the estimation error. This process is visualised in Figure~\ref{fig: simulation scheme} and yields a situation where the error of the model can be assessed despite the general lack of ground truth in sports modelling \citep{DavisBransen2024}.

To construct the ground truth models, we use the openly available StatsBomb open data \citep{StatsBombOpenData}. All available games from the Premier League, Ligue 1, Serie A, La Liga, and the Bundesliga are used. The events that do not describe passes, dribbles, errors, clearances, or shots are filtered out. This results in a dataset of approximately 4 million in-game events, or roughly 6.5 full seasons of a single league.

The code for the simulations is available on \href{https://github.com/kvanarem42/xthreat-error-quantification}{GitHub}. The computations are performed on the DelftBlue Supercomputer \citep{DHPC2024}.

\subsubsection{Distribution of the estimation error}
We can use the simulation framework to investigate the behaviour of the estimation error in practice by creating a data set of estimation errors with corresponding values of $N$ and $M$.
We fit a statistical model to the simulated estimation error to better describe its behaviour, improving on the theoretical bounds. 

Because the size of the estimation error, $\|xT-\widehat{xT}^{(k)}\|_\infty$, is a maximum of absolute values, the statistical model must have positive support. The theoretical analysis indicates that the error becomes larger for increasing values of $M$ and decreasing values of $N$, so both the variance and mean can be expected to depend on these variables. However, the theoretical bound is a worst-case result, and we hypothesise that the true dependence on $M$ and $N$ is in reality of lower order than $O(M^{3/2}\sqrt{\log(M)}/\sqrt{N})$. This motivates a statistical model with a multiplicative structure and flexible powers of $M$ and $N$.

Therefore, a parametric model with a lognormal distribution is assumed as as follows. \begin{equation}\label{eq: simulations lognormal distribution}
    \left\|xT-\widehat{xT}^{(k)}\right\|_\infty = C M^\alpha/(\sqrt{N})^\beta e^\varepsilon, \qquad \text{where }\varepsilon\sim N(0, \sigma^2).
\end{equation} 
This random variable only attains positive values. The parameters describing the powers of $N$ and $M$ make it possible to scale both the mean and expected value of the error. Additionally, they allow for inference on the powers for $N$ and $M$. 

When $c=\log(C)$, this is equivalent to
\begin{equation}\label{eq simulations lognormal distribution logarithm}
    \log\left(\left\|xT-\widehat{xT}^{(k)}\right\|_\infty\right) = c + \alpha \log(M) - \beta\log(\sqrt{N}) + \varepsilon, \qquad \text{where, } \varepsilon\sim N(0, \sigma^2).
\end{equation}
This formulation gives a linear model in the transformed features with homoscedastic normal residuals. We can use ordinary least squares (OLS) to estimate the unknown parameters $c, \alpha, \beta$, and $\sigma^2$. The fit of the estimated distribution is then assessed based on the $R^2$, the normality and homoscedasticity of the residuals. 

We can use the simulation framework to investigate the behaviour of the estimation error in practice. The sampling process is repeated 1,000 times for different values of $N$ and $M$. The combinations of the values of $N$ are given in Table~\ref{tab: values N simulation distribution} and those of $M$ with corresponding grid dimensions in Table~\ref{tab: values M simulation distribution}. The grids roughly follow the same proportions as the traditional $16\times12$ grid. These simulations result in 104,000 estimated models with different values of $M$ and $N$, with their corresponding error quantities. 

\begin{figure}[htb]
\begin{minipage}[t]{0.38\textwidth}
    \centering
    \captionsetup{width=\textwidth}
    \captionof{table}{Sample sizes $N$ used in the simulations to study the distribution of the estimation error.}
    \label{tab: values N simulation distribution}
    \begin{tabular}{ccc}
        \hline
        \multicolumn{3}{c}{\textbf{Sample sizes \(N\)}} \\
        \hline
        100,000   & 240,000 & 1,300,000 \\
        130,000   & 370,000 & 4,000,000 \\
        170,000   & 630,000 &  \\
        \hline
    \end{tabular}
\end{minipage}
\hfill
\begin{minipage}[t]{0.58\textwidth}
    \centering
    \captionsetup{width=\textwidth}
    \captionof{table}{Grid sizes and number of game states $M$ used in the simulations to study the distribution of the estimation error.}
    \label{tab: values M simulation distribution}
    \begin{tabular}{ccc}
        \hline
        \multicolumn{3}{c}{\textbf{Grid dimensions: $m_x\times m_y$ ($M$)}} \\
        \hline
        $8\times6\phantom{00}\ (48)\phantom{0}$   & $20\times15\ (300)\phantom{0}$ & $48\times36\ (1728)$ \\
        $10\times8\phantom{0}\ (80)\phantom{0}$   & $24\times18\ (432)\phantom{0}$ & $56\times42\ (2352)$ \\
        $12\times9\phantom{0}\ (108)$   & $28\times21\ (588)\phantom{0}$ & $64\times48\ (3072)$ \\
        $14\times11\ (154)$ & $32\times24\ (768)\phantom{0}$& \\
        $16\times12\ (192)$ & $40\times30\ (1200)$ & \\
        \hline
    \end{tabular}
\end{minipage}
\end{figure}

Some combinations of $M$ and $N$ result in errors too large for practice. To exclude these unusable combinations, combinations not satisfying $M^{3/2}\sqrt{\log(M)}/\sqrt{N}\geq 160$ are filtered out. The value of $160$ is empirically determined to filter out the situations with a mode of the estimation error between 0.8 and 1.0. The necessity of this filter is examined via the unfiltered samples. For each combination of $M$ and $N$, kernel density estimates of the model are computed and compared. 

For each resampled model, the model parameters like $\hat{g}$ and $\hat{T}$ are estimated in the simulation process. This allows for calculating the estimation error of these parameters. To find the importance of the errors for different model parameters, we study the relationship between the estimation error of the model and that of the model parameters through scatter plots, Pearson correlation coefficients, and histograms.

\subsubsection{Acceptable estimation error}
With the distribution of the estimation error known, a natural question is: what estimation error is acceptable? A possible purpose of the Expected Threat model is to quantify the ability of a player to create threat within the setting of performance monitoring for a prospective transfer target. We use expert consultation together with simulations to determine when a model is of sufficient quality.

To study the ability of a player to create threat, we consider the values of xT created per 90 minutes player for each player. These performances of players are dependent on the context, like playing position and the league. The xT created by a specific player should thus be interpreted relative to players in a similar situation. Therefore, a natural way of quantifying the quality of players is to consider their quartile within a group of players playing at the same position and in the same league. 

In consultation with practitioners, it is established that an estimation error is acceptable if at most 10\% of the players are assigned an incorrect quartile by the estimated model. Additionally, the quartile assigned by the estimated model may only differ from the ground truth by at most one quartile. To assure this is likely to hold, these situations should be satisfied with at least $0.90$ probability. In short, a model is of acceptable quality if, with probability of $0.90$, less than 10\% of the players are assigned the incorrect quartile with at most one quartile change.

To find to which estimation error this corresponds, we perform simulations. Players playing as centre forwards in the French Ligue 1 in the 2015/2016 season are considered. As these players played in the same competition and position, they roughly faced the same opponents and their values can be compared. Players with less than 300 minutes as a centre forward are filtered out to ensure reliable estimates. This results in a group of 47 players whose ratings can be compared.

A ground truth model with a $16\times12$ ($M=192$) grid is used to simulate estimated models for different values of $N$. Table~\ref{tab: combinations of N max error simulations} describes the different sizes of the training data set $N$, each of which is simulated 10,000 times. For all resampled models, the xT created per 90 minutes is calculated for each player. We then determine the quartile of this player's performance and compare it to the quartile assigned based on the ground truth model. For each estimated model, the estimation error is determined too. This results in a dataset containing 130,000 samples of the estimation error, number of incorrectly assigned player quartiles, and the largest quartile change by a player rating.
\begin{table}[thb]
\centering
\captionsetup{width=0.8\textwidth} 
\caption{Sample sizes \(N\) used in the simulations to find the maximal acceptable estimation error.}
\begin{tabular}{ccc}
\hline
\multicolumn{3}{c}{\textbf{Sample sizes \(N\)}} \\
\hline
100,000   & 630,000   & 3,000,000 \\
130,000   & 1,300,000 & 3,500,000 \\
170,000   & 1,650,000 & 4,000,000 \\
240,000   & 2,000,000 &     	  \\
370,000   & 2,500,000 &           \\
\hline
\end{tabular}
\label{tab: combinations of N max error simulations}
\end{table}
To study the relationship between the estimation error and the number of players with incorrect quartile estimations, the data points are evenly divided into 75 bins with a similar estimation error. For each bin, the median and the 10\%-90\% quantile range are calculated. We use this to find the maximal acceptable size of the estimation error ($EE_{\max}$). This is the highest estimation error that has less than 10\% of the players change at most one quartiles with a 0.90 probability. More specifically, this $EE_{\max}$ is defined as the point where the 90\% quantile of the number of changes is less than 4.7, which represents 10\% of the number of players.

\begin{figure}[hb]
    \centering
    \includegraphics[width=\linewidth]{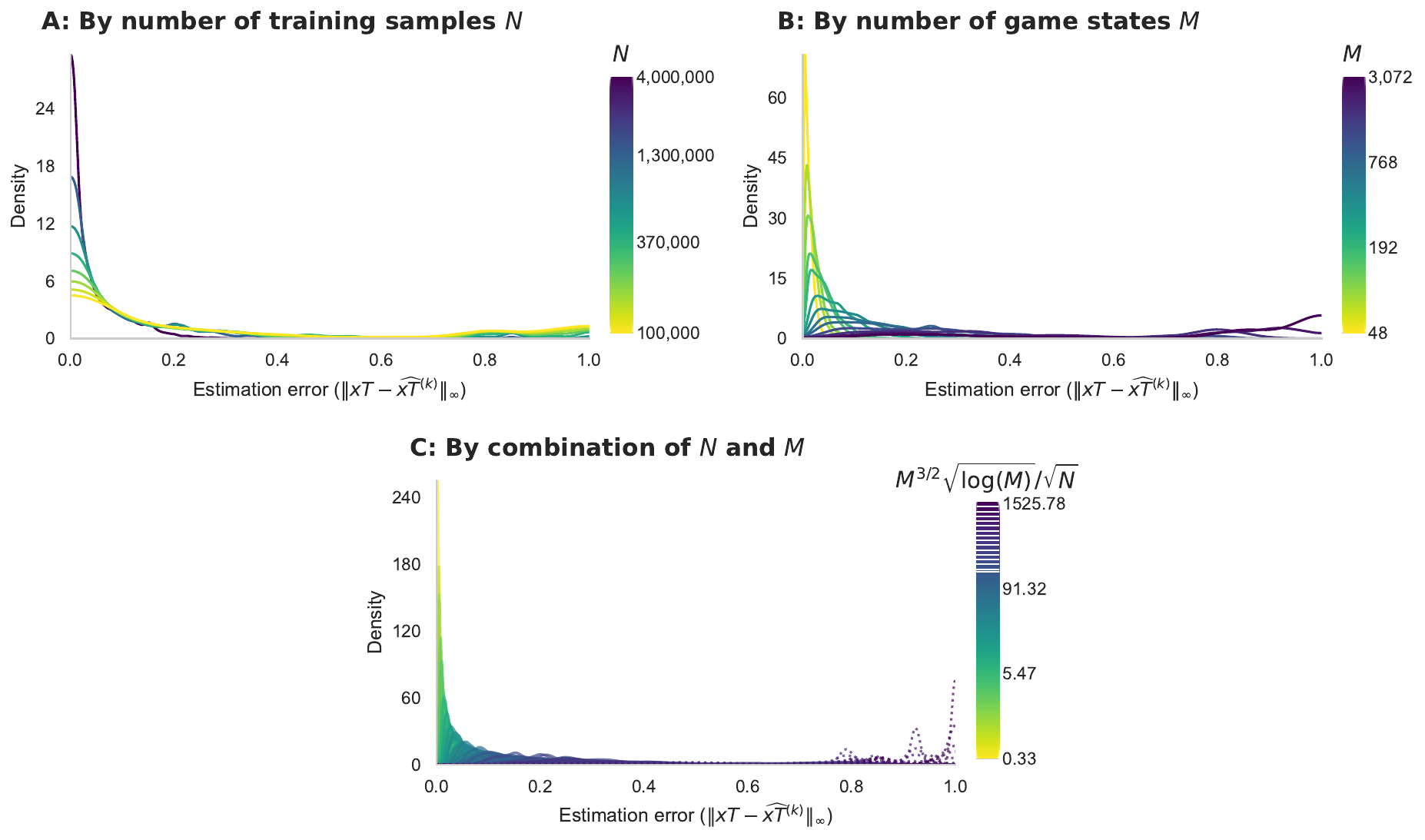}
    \caption{The kernel density estimates of the density of the estimation error for each value of \textbf{A:} $M$, \textbf{B:} $N$, and \textbf{C}: $M^{3/2}\sqrt{\log(M)}/\sqrt{N}$ with filtered out combinations shown as striped lines.}
    \label{fig: KDE plots M and N}
\end{figure}

\subsection{Results}
\subsubsection{Distribution of the estimation error}
Figure~\ref{fig: KDE plots M and N}A and Figure~\ref{fig: KDE plots M and N}B show kernel density estimates for the density of the estimation error based on different values of $N$ and $M$ before any filtering. The results confirm that large error sizes occur more frequently with small values of $N$ and with a large number of game states $M$. This confirms that large estimation errors correspond to situations with a large number of game states $M$ and a small number of training data points $N$ as found in Section~\ref{sec: theoretical results}. 

The density estimates for each combination of $M$ and $N$ is given in Figure~\ref{fig: KDE plots M and N}C. The colours show that indeed a higher value of $M^{3/2}\sqrt{\log(M)}/\sqrt{N}$ generally corresponds to a density with higher estimation errors. The densities show that the cases where $M^{3/2}\sqrt{\log(M)}/\sqrt{N}\geq160$ contain all combinations of $N$ and $M$ where densities have mass above 0.8. This means that the applied filter effectively removes the cases which are infeasible in reality.

The errors between 0.8 and 1.0 correspond to situations where the training data did not contain any shot in a game state with a high probability of scoring. The value of 0 is assigned to the probability of scoring directly from that state $g(s)$. Consequently, the model greatly underestimates the $xT$-values. Because these situations can be observed via the number of shots, it is possible to monitor such large errors for an individual model during model training.

The relationship between the estimation error and error in the estimation of the model parameters $T$ and $g$ is visualised in Figure~\ref{fig: simulations model part relations}, along with histograms of the involved quantities. The histograms indicate that errors in estimating the goal-scoring probabilities, $g$, are generally larger than the errors in estimating the transition matrix weighted by their $xT$-value. Additionally, the error in the goal-scoring probabilities has a high Pearson correlation coefficient ($r=0.9935$) with the estimation error. The correlation coefficient of the weighted error in the transition matrix estimation is lower ($r=0.8069$). The results show that in reality the estimation error is not dominated by the errors in the transition matrix, but by the errors in the goal-scoring probabilities.

\begin{figure}[htb]
    \centering
    \includegraphics[width=0.80\linewidth]{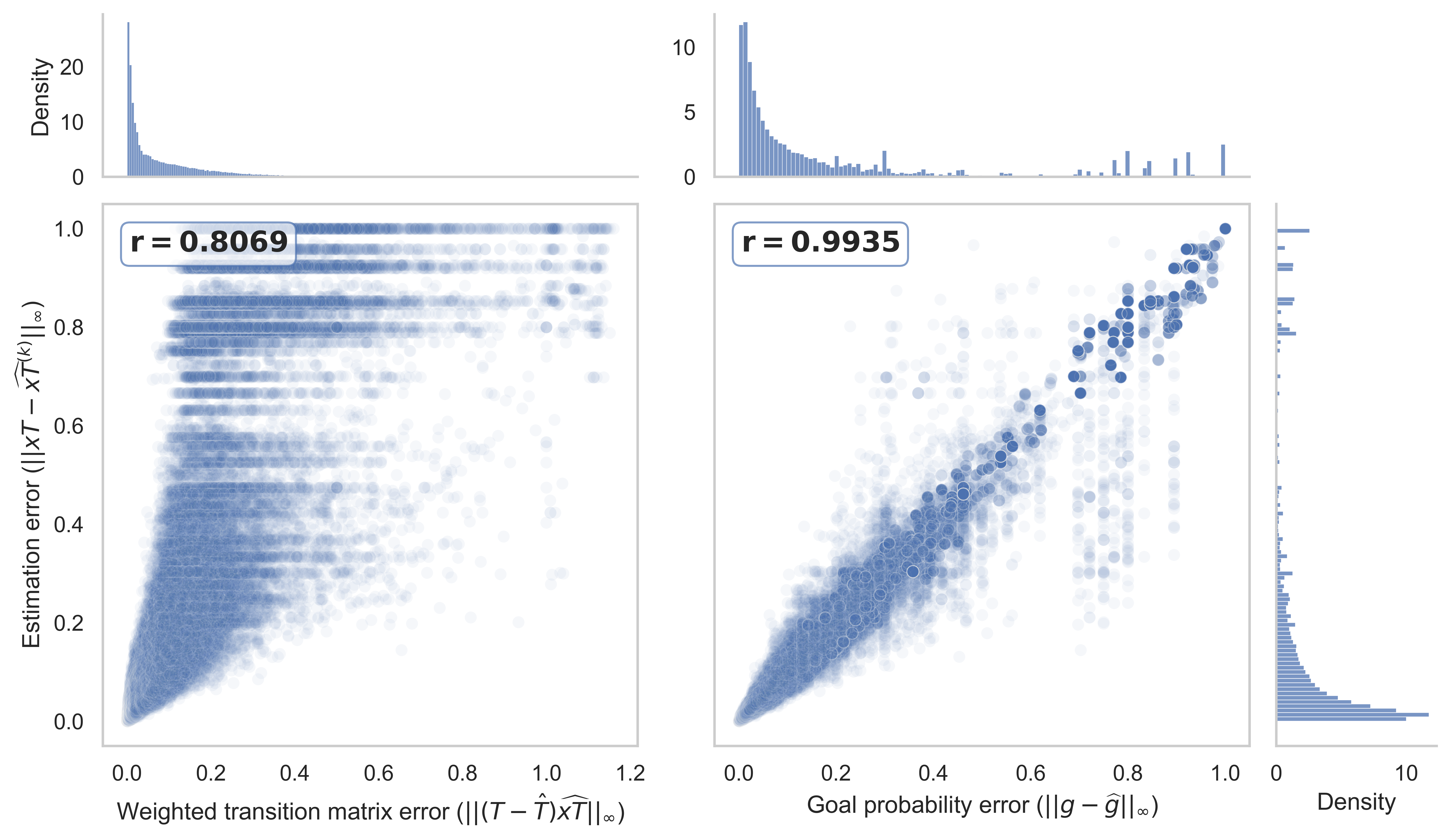}
    \caption{A scatterplot of the error in estimating different parts of the model plotted against the estimation error accompanied by a histogram of their values.}
    \label{fig: simulations model part relations}
\end{figure}

Table~\ref{tab: OLS summary lognormal distribution} shows the results of fitting the parametric model in \eqref{eq simulations lognormal distribution logarithm} using OLS. The $R^2$-value of 0.864 indicates that most of the differences between the estimation errors are explained by the estimated dependency on $M$ and $N$. It suggests that the estimated model captures the relationship well.

\begin{table}[htb]
\caption{Summary of the OLS model fitted on the log estimation error for the data points with $M^{3/2}\sqrt{\log(M)}/\sqrt{N}<160$.
\label{tab: OLS summary lognormal distribution}}
\centering
\resizebox{1.0\columnwidth}{!}{%
\tabcolsep=0pt
\begin{tabular*}{\columnwidth}{@{\extracolsep{\fill}}lccccccc@{}}

\midrule
\textbf{Dep. Variable} & \multicolumn{2}{l}{$\|xT-\widehat{xT}^{(k)}\|_{\infty}$} &
\textbf{R-squared} & 0.864 &
\textbf{No. Observations} & 76,000 & \\

\textbf{Model} & \multicolumn{2}{l}{OLS} &
\textbf{Adj. R-squared} & 0.864 &
\textbf{Log-Likelihood} & $-42,284$ & \\
\toprule
\midrule
\textbf{Parameter} & \textbf{Coef.} & \textbf{Std. Err.} & \textbf{t} & \textbf{P$>|$t$|$} & \textbf{[2.5\%} & \textbf{97.5\%]} & \\

$c$ & $-2.0916$ & 0.017 & $-121.500$ & 0.000 & $-2.125$ & $-2.058$ & \\
$\alpha$ & 1.0100 & 0.002 & 635.048 & 0.000 & 1.007 & 1.013 & \\
$\beta$ & 1.0267 & 0.003 & 402.624 & 0.000 & 1.022 & 1.032 & \\
\midrule
Variance residuals & 0.1782 & & & & & & \\
% \botrule
\end{tabular*}
}
\end{table}

Figure~\ref{fig: residual plots per M and N} shows the residuals plotted against the fitted values for different levels of $M$ and $N$. The results show that small values of $M$ and large values of $N$ have slightly higher variation in the residuals. Still, this effect is relatively small. Thus, the residuals are not too far from homoscedastic.

\begin{figure}[hbt]
    \centering
    \includegraphics[width=\linewidth]{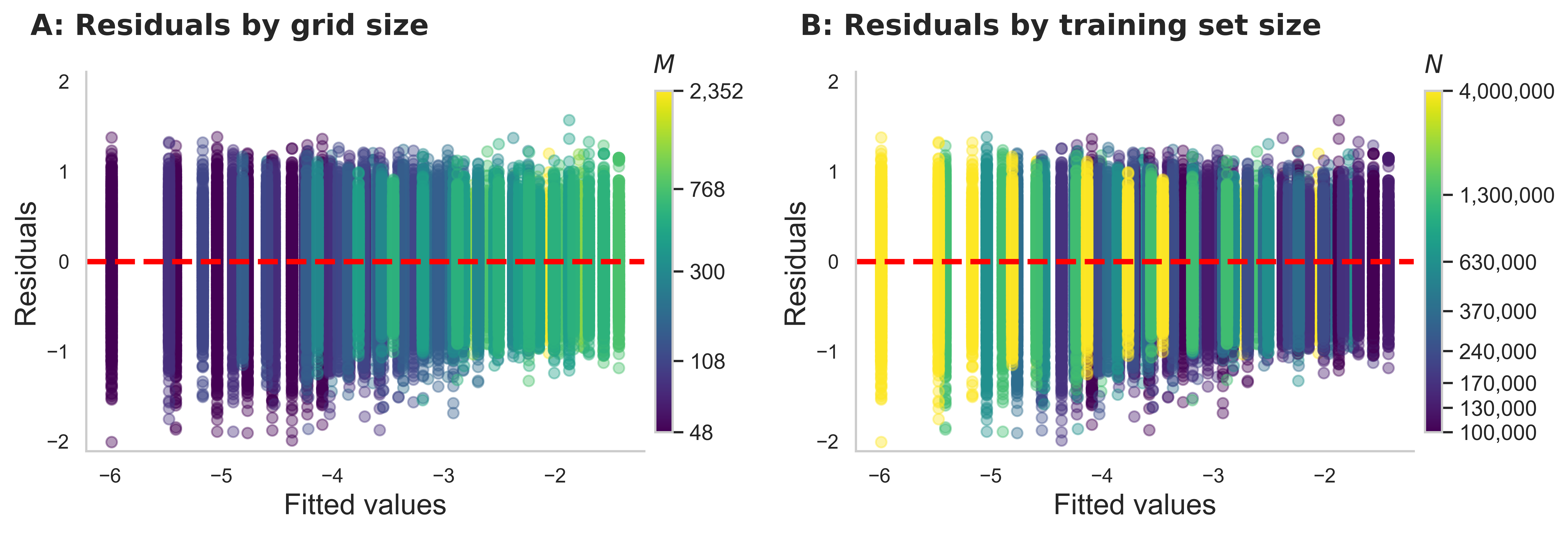}
    \caption{The residuals versus the predicted values in the OLS model coloured by their value of \textbf{A:} $M$ and \textbf{B:} $N$.}
    \label{fig: residual plots per M and N}
\end{figure}

A QQ-plot of the residuals is shown in Figure~\ref{fig: results simulation QQ-plot}. The residuals roughly follow a normal distribution, although the data have a heavy left tail and a light right tail. This means that the fitted model slightly overrepresents large errors compared to reality, which translates to a small overestimation of the quantiles of the estimation errors in practice. As a result, the rules of thumb derived under the normality assumption will be slightly conservative, providing a margin of safety in practice.
\begin{figure}[htb]
    \centering
    \includegraphics[width=0.55\linewidth]{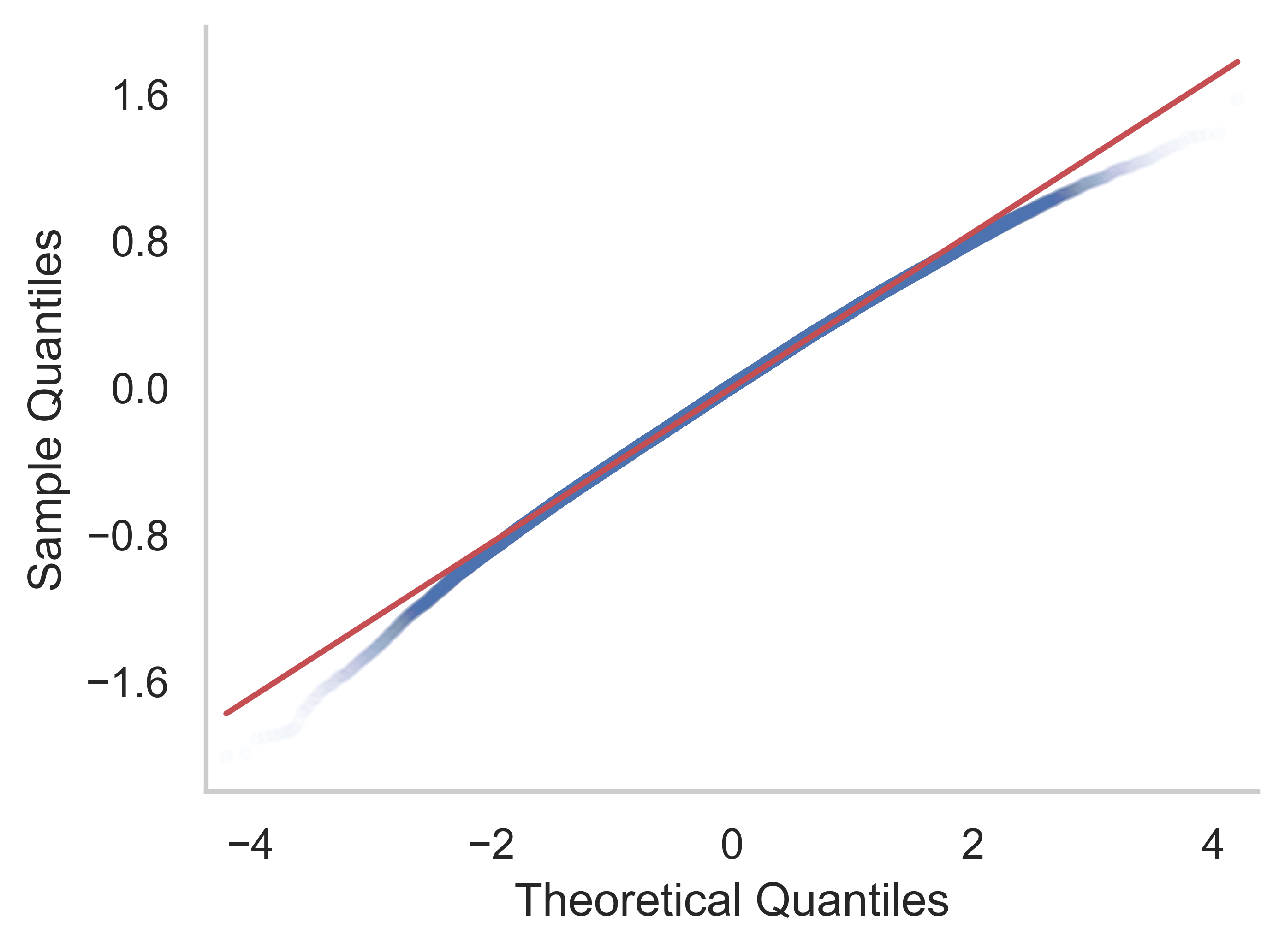}
    \caption{A QQ-plot of the model residuals versus a standard normal distribution.}
    \label{fig: results simulation QQ-plot}
\end{figure}

The results indicate that the model explains most of the variance of the estimation error, that the residuals are roughly homoscedastic and that they are roughly normally distributed. Therefore, we conclude that the estimation error approximately follows a lognormal distribution on the part of the domain of $M$ and $N$ that is used in practice. The values of the coefficients in Table~\ref{tab: OLS summary lognormal distribution} indicate that the size of the estimation error approximately has the following distribution.

\begin{equation}\label{eq: result conclusion approximate distribution estimation error}
    \left\|xT-\widehat{xT}^{(k)}\right\|_\infty = e^{-2.0916}\frac{M^{1.01}}{(\sqrt{N})^{1.0267}}e^{\varepsilon}, \qquad \text{where } \varepsilon\sim N(0, 0.1782).
\end{equation}

\subsubsection{Acceptable estimation error}
Domain expert consultation indicated that a model is of acceptable quality if less than 10\% of the players are assigned the incorrect quartile with at most one quartile change. This should hold with probability 0.9. The simulations indicate that more than one quartile changes rarely occur. The relationship between the size of the estimation error and the number of players assigned the wrong quartile is given in Figure~\ref{fig: quantile changes per estimation error}. It is visible that the number of quartile changes typically increases with the estimation error. 

More specifically, the results show that the highest estimation error having less than 10\% quartile changes with a probability of 0.90 is 0.0192. Therefore, it is found that the maximal acceptable estimation error long term performance monitoring of transfer targets is $EE_{max}=0.0192$.
\begin{figure}[hbt]
    \centering
    \includegraphics[width=0.75\linewidth]{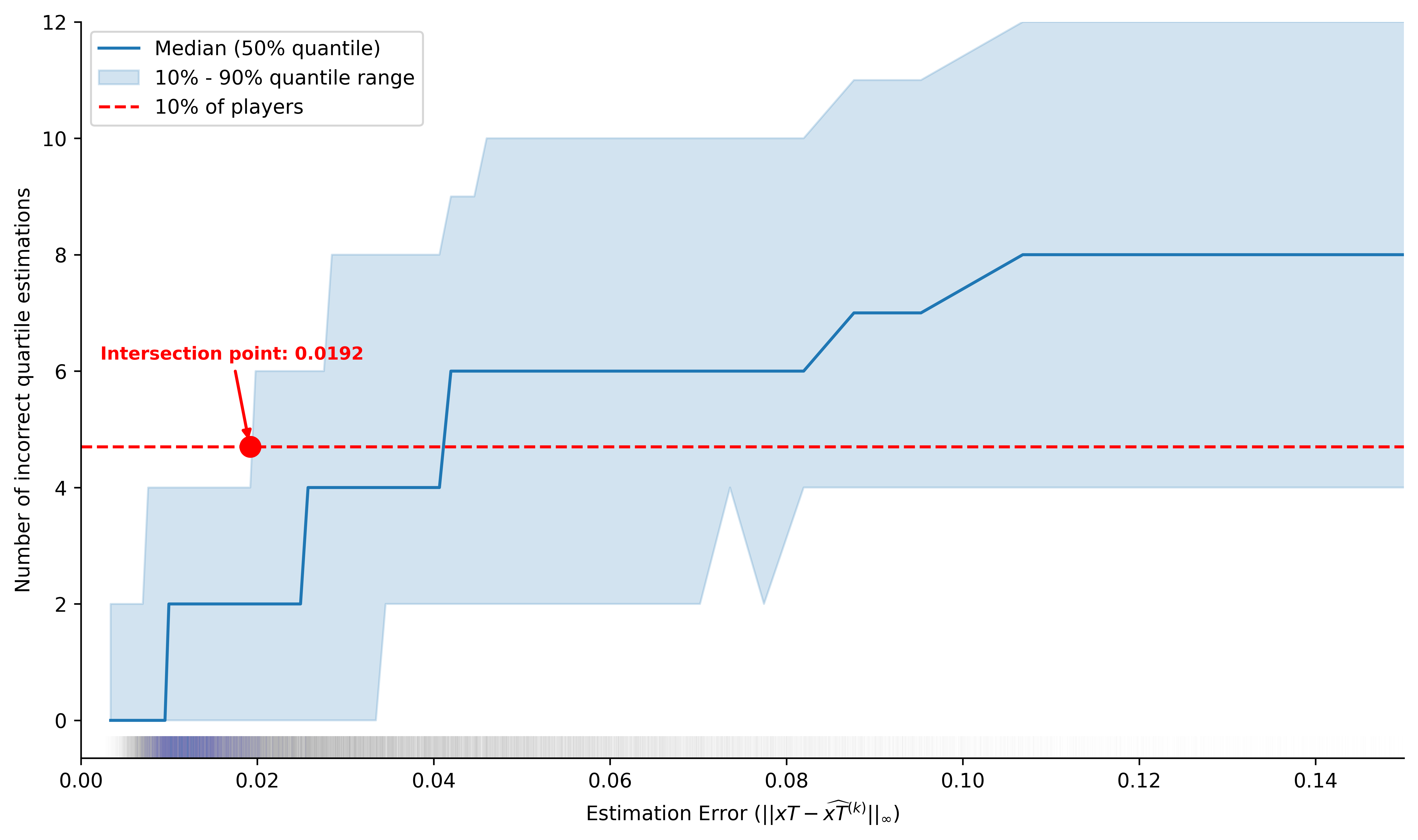}
    \caption{The estimated median and the 10\%-90\% quantile range of the number of incorrect quartile assignments of players by the resampled models for different values of the estimation error.}
    \label{fig: quantile changes per estimation error}
\end{figure}

\section{Application}\label{sec: Application}
The Expected Threat model can be used to assist in decision-making in real-life situations at football clubs. For instance, the model can be used to calculate the xT created per 90 minutes for different players. The xT created per 90 minutes is a performance metric that can be used to objectively compare transfer targets when trying to find a player that is good at moving possession to high-threat situations. 

To responsibly use the Expected Threat model, one first has to ensure that the Expected Threat model is of sufficient quality and has an acceptable estimation error. The findings of this study can be applied either to validate the quality of an existing model or to design a new model of sufficient quality. For this purpose, we formulate rules of thumb based on the results in Section~\ref{sec: Simulations}. Lastly, it is illustrated how an Expected Threat model with an acceptable estimation error can be used for scouting purposes on publicly available StatsBomb data of the 2020 Euros \citep{StatsBombOpenData}.

\subsection{Model validity}

\subsubsection{Quality check of an existing model}
In case football clubs already have access to an existing Expected Threat model, it is beneficial to check whether the existing model is already of sufficient quality. This prevents the unnecessary use of time and resources.

Consider the Expected Threat model by \citet{Singh2018}, visualised in Figure~\ref{fig:singh_xt}B, as an example. This model was trained on a data set of the 2017-2018 Premier League season, which roughly corresponds to $N=620,000$. It has a $16\times12$ grid resulting in $M=192$. These parameters can be plugged into \eqref{eq: result conclusion approximate distribution estimation error} to find that the estimation error behaves like a lognormal variable as visualised in Figure~\ref{fig:singh_xt}A. 

\begin{figure}[htb]
    \centering
    \includegraphics[width=0.9\linewidth]{figures/singh_combined_plot.pdf}
    \caption{An illustration of the quality check of the Expected Threat (xT) model introduced by \citet{Singh2018}. \textbf{A:} The corresponding distribution of the estimation error of such model with with $M=192$ and $N=620{,}000$. \textbf{B:} A visualisation of the $xT$-values of the model.}
    \label{fig:singh_xt}
\end{figure}

The area under the curve with estimation errors smaller than $EE_{max}=0.0192$ represents the probability of the estimation error being of acceptable size. Applying this criterion yields a probability of 0.2209, suggesting that the model may not be of the required quality. Player ratings derived from this model should be interpreted with appropriate caution.

\subsubsection{Selecting grid size for given training set}
If no model of sufficient quality is present, it is necessary to train a new model. The most likely scenario is that there is a specific data set available to train the model on. A natural question that arises is: what is the best number of game states $M$ given a training dataset of size $N$?

When selecting the grid, there is a clear trade-off present. On the one hand, a small number of game states is desirable since it results in a smaller error in estimating the Expected Threat values. On the other hand, a model with more game states is better able to distinguish between different in-game situations, which reduces the error due to model assumptions as discussed in Section~\ref{subsec: quantification of estimation error}. Hence, a finer grid could be favourable. 

Based on the results of the simulations in combination with expert consultation, we formulate the following rule of thumb. \textit{Select the most flexible model (highest $M$) such that the estimation error is smaller than $EE_{max}=0.0192$ with 0.90 probability.} This selection rule ensures that the model is best able to distinguish between different in-game scenarios, while keeping the error within an acceptable range.

As an illustration, we consider a situation where 4 seasons of data are available for a specific league with 20 teams. This corresponds to 2.48 million data points. Using that $N=2,480,000$, Figure~\ref{fig: rule of thumb M}A shows the quantiles of the estimation error for different grid sizes and the maximal acceptable estimation error $EE_{max}$. The rule of thumb indicates to select $M=80$, which corresponds to a $10\times8$ grid. Figure~\ref{fig: rule of thumb M}B shows the model obtained with a training set of the given size and the obtained grid.

\begin{figure}[htb]
    \centering
    \includegraphics[width=0.9\linewidth]{figures/xT_find_grid_combined_plot.pdf}
    \caption{An illustration of the rule of thumb for determining the grid size given $N=2{,}480{,}000$. \textbf{A:} The quantiles of the estimation error of an Expected Threat model for different grid sizes. The point just before the intersection point of the 90\% quantile and $EE_{\max}$ gives the selected grid size. \textbf{B:} A visualisation of the $xT$-values of the resulting model with a $10\times8$ grid trained on $N=2{,}480{,}000$ events.}
    \label{fig: rule of thumb M}
\end{figure}

\subsubsection{Required data set size for a given grid}
When a football club is setting up the infrastructure to use in-house data-driven modelling, it might be the case that a dedicated training data set is not yet fully present. To sufficiently distinguish between in-game situations for a specific application, the staff might desire an Expected Threat model with a specific grid and would like to know what the required size of the data set is.

The findings show that a larger training data set results in a smaller estimation error. However, purchasing access to a larger training data set is more expensive for a football club. Additionally, one might not want to use data from too many different seasons or leagues to ensure alignment between the training data and the application. For these cases, we formulate the following rule of thumb. \textit{Select the smallest value $N$ such that the estimation error is smaller than $EE_{max}=0.0192$ with probability $0.90$.}

Suppose that an Expected Threat model with a grid of $16\times12$ is desired. Using that $M=16\cdot12=192$, the quantiles of the estimation error for different sizes of the training set $N$ are shown in Figure~\ref{fig: rule of thumb N}A. The smallest value of $N$ such that the estimation error is acceptable is 3,348,000. Therefore, a training set of roughly $N=3.35$ million data points is required, corresponding to data of 5.4 seasons. We trained a model with the $16\times12$ grid on a data set of the required size, which is visualised in Figure~\ref{fig: rule of thumb N}B.

\begin{figure}[htb]
    \centering
    \includegraphics[width=0.9\linewidth]{figures/xT_find_sample_size_combined_plot.pdf}
    \caption{An illustration of the rule of thumb for determining the needed data set size given a $16\times12$ grid ($M=192$). \textbf{A:} The quantiles of the estimation error of an Expected Threat model for training set sizes $N$. The intersection point of the 90\% quantile and $EE_{\max}$ is the selected training set size. \textbf{B:} A visualisation of the $xT$-values of the resulting model with a $16\times12$ grid trained on $N=3{,}348{,}000$ events.}
    \label{fig: rule of thumb N}
\end{figure}

\subsection{Talent identification at the 2020 European Championship}
With a validated Expected Threat model, it is possible to evaluate on-the-ball actions of players. The quality of an on-the-ball action can be determined by the increase in threat, given by the estimated probability of scoring a goal, as described in \eqref{eq: diff in xT}. By evaluating the quality of the in-game actions performed by a player, ratings can be obtained describing the on-the-ball quality of a player. 

As an example, we use the Expected Threat model obtained by the last rule of thumb ($M=192$ and $N=3{,}348{,}000$) to find the midfielders who created the most goal-scoring threat at the 2020 European Championship. This can, for instance, be used to aid in a scouting process to identify possible targets when searching for a threat-creating midfielder. 

The data used is part of the StatsBomb Open Data and is publicly available. A filter was applied that selected those who played most games as midfielder and played at least 300 minutes. For each player, ratings were obtained by calculating the total xT difference by his on-the-ball actions normalised per 90 minutes. Within this group of players, the quantiles were calculated for each player.

The results are visible in Figure~\ref{fig: application beeswarm plot midfielders euros}. It shows that the xT created per 90 minutes varies between 0.055 and 0.336. Kai Havertz, Marcel Sabitzer, Mikel Damsgard, and Thomas M\"uller have the largest xT created and are in the fourth quartile (Q4), highlighted with dark green. Because of the relatively small number of games in the tournament, the differences cannot be interpreted as significant, especially not the small difference between Kai Havertz and Marcel Sabitzer. We conclude that these players were among the best at creating goal-scoring threat of all midfielders at the 2020 European Championship.

\begin{figure}[htb]
    \centering
    \includegraphics[width=0.9\linewidth]{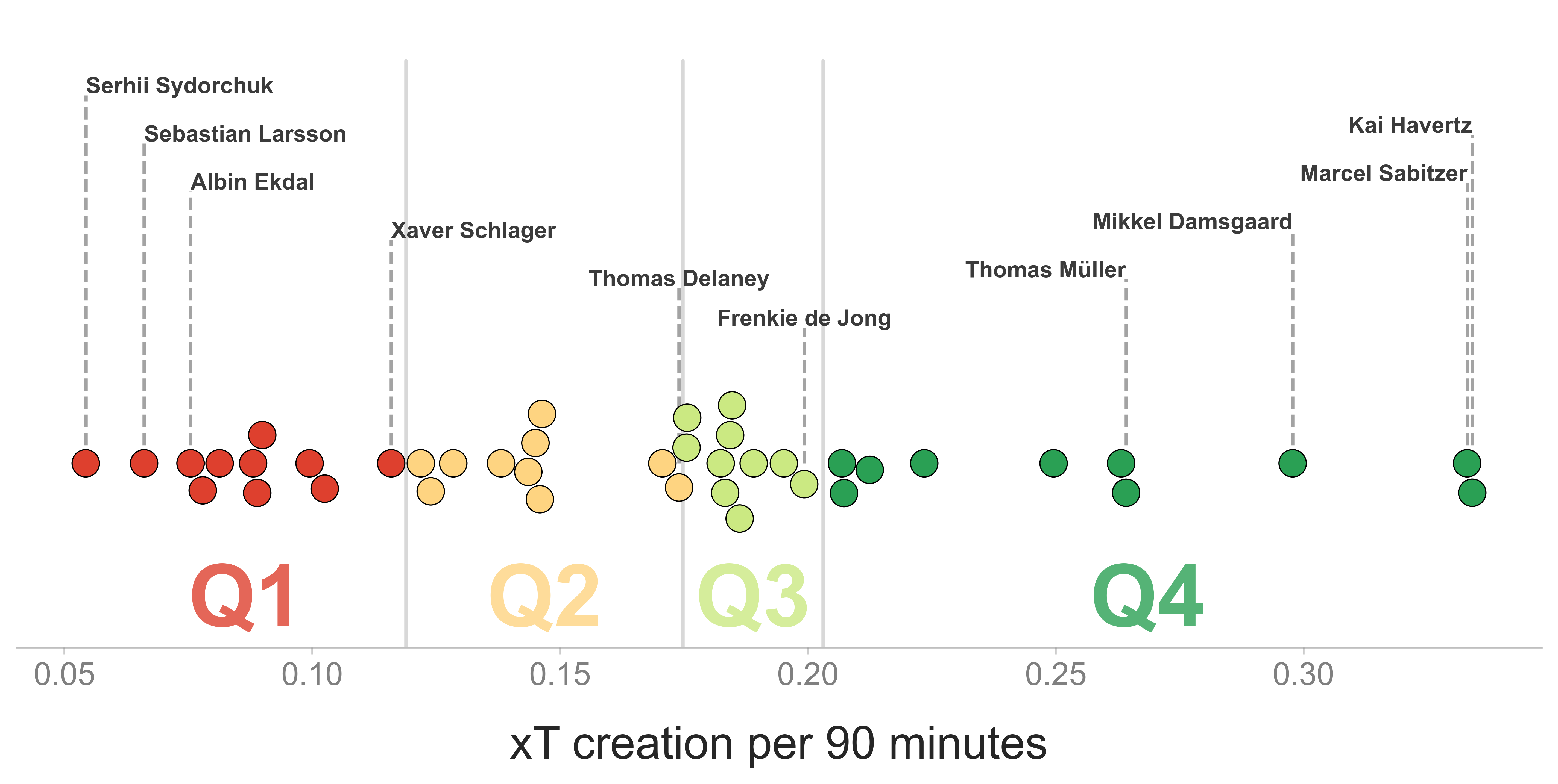}
    \caption{The xT added per 90 minutes by midfielders with at least 300 minutes played at the 2020 European Championship.}
    \label{fig: application beeswarm plot midfielders euros}
\end{figure}

On the other hand, Serhii Sydorchuk, Sebastian Larsson, and Albin Ekdal appear to be the worst-performing players according to this metric. However, possible biases in the data or the model should always be taken into account when evaluating the performances of players. In this specific situation, a selection bias is clearly apparent. Midfielders playing for countries that did not pass the group phase of a tournament most likely underperformed. Still, these were excluded by filtering out those with less than 300 minutes played because no reliable estimate of the xT created per 90 minutes could be made. Thus, the players not being included are likely the worst performing players, and the seemingly low-performing players in Figure~\ref{fig: application beeswarm plot midfielders euros} might have average or even good performance. This highlights the importance of taking into account possible biases in the data.

\section{Discussion}
The goal of this paper was (1) to gain insights into how the error of the Expected Threat model depends on the number of game states $M$ and the number of training points $N$, and (2) to provide guidelines for practitioners to build and apply the Expected Threat model. We find that the estimation error behaves approximately like a lognormal random variable proportional to $M^{1.01}/N^{0.51335}$. In addition, simulations combined with expert consultations identify a maximal acceptable error of $EE_{\max}=0.0192$. Based on this, we established practical rules of thumb to find the appropriate number of game states ($M$) or the desired number of training points ($N$) when constructing a model. An application to the 2020 Euros further illustrates how player performance can be evaluated relative to comparable players and with awareness of potential biases.

The theoretical analysis yields a bound on the estimation error of order $M^{3/2}\sqrt{\log(M)}/\sqrt{N}$, which is dominated by errors in the estimated transition matrix. The simulations show that the estimation error is mostly determined by the errors in the goal-scoring probabilities, which are of a lower order than the errors in the estimated transition matrix. Consequently, the error behaves approximately lognormal like 
\begin{equation*}
    \left\|xT-\widehat{xT}^{(k)}\right\|_\infty = e^{-2.0916}\frac{M^{1.01}}{(\sqrt{N})^{1.0267}}e^\varepsilon, \qquad \text{where } \varepsilon\sim N(0, 0.1782).
\end{equation*}
We conclude that the typical model quality is substantially better than suggested by the theoretical bound.

The estimation error seems to be largely influenced by the estimation of the direct goal-scoring probabilities. High-threat states are rarely observed, because defending teams actively prevent ball possession there. With many game states $M$ and few data points $N$, a lack of observed shots can cause the model to assign zero probability of scoring directly via a shot in these high-value states, causing large errors. Consequently, the most fruitful improvements to the Expected Threat model could be attained by enhancing the estimation of the goal-scoring probabilities. Leveraging existing shooting probability models and xG models to estimate the corresponding quantities will lead to more efficient estimators. Enforcing symmetry, applying smoothing techniques or incorporating prior knowledge could also boost the performance based on domain knowledge as illustrated by \citet{vanArem2024} and \citet{HassaniRamdaniLofti2025}.

The maximal acceptable error ($EE_{\max}$) of 0.0192 means that the absolute error of the $xT$ for each game state must be smaller than 1.92\% for applications in player scouting. Existing models, such as that of \citet{Singh2018}, might fall short of this threshold. 
Notably, the rules of thumb suggests that approximately 5.4 seasons of data are required for a $16\times12$ grid, far more than typically used in practice. Additional model validation methods like visual inspection by domain experts and verification methods such as described by \citet{DavisBransen2024} can be performed to complement the rules of thumb described. Sufficient technical expertise is needed to perform this. Under these circumstances, less conservative values might be considered instead of $EE_{\max}=0.0192$. This highlights the need for caution when applying the Expected Threat or related possession value models.

In the simulations, models trained on an open-access data set of roughly 4 million data points were assumed as the ground truth, which constitutes a limitation. These models can contain estimation errors and may be less smooth than the true underlying values. This might make the estimation task more difficult, and consequently, the estimation error may be slightly overestimated. Future research with larger, probably proprietary, data sets could evaluate the exact impact of this effect.

Within the broader set of Expected Possession Value models \citep{Fernandez2019, Fernandez2021, Stockl2021, Overmeer2025}, there exist other models that incorporate additional in-game context. Like the Expected Threat model, these EPV models implicitly rely on the Markov assumption. The simulations and rules of thumb in this paper can be extended in a similar manner to such models as well, enabling a systematic assessment of model quality. In this way, this paper provides a framework to quantify model quality, despite the ground truth being an intangible object in football analytics.

\section{Conflicts of interest}
The authors declare that they have no competing interests.

% \section{Funding}
% This work is supported in part by funds from the National Funding Body (NFB: \# XXXXXXX and \# YYYYYYY).

\section{Data availability}
The data underlying this article are available in the the Statsbomb Open Data set \citep{StatsBombOpenData} at \url{https://github.com/statsbomb/open-data}. The code for Expected Threat model with methodology to perform the simulations is provided by the authors at \url{https://github.com/kvanarem42/xthreat-error-quantification} along with the code to reproduce the study.

\section{Author contributions statement}
K.v.A., J.S., M.B. and G.J. conceptualised the problem statement. K.v.A. carried out the theoretical analysis, simulations and visualisations. K.v.A., with input of M.B., translated the results to the application. J.S. and G.J. provided supervision. K.v.A. wrote the draft. K.v.A., J.S., M.B. and G.J. reviewed and edited the manuscript.
% Must include all authors, identified by initials, for example:
% S.R. and D.A. conceived the experiment(s),  S.R. conducted the experiment(s), S.R. and D.A. analysed the results.  S.R. and D.A. wrote and reviewed the manuscript.

\section{Acknowledgments}
The authors thank StatsBomb for making the StatsBomb Open Data set available.
\bibliographystyle{oup-abbrvnat}
\bibliography{reference}

% %% sample for biography with author's image
% \begin{biography}{{\color{black!20}\rule{77pt}{57pt}}}{\author{Author Name.} This is sample author biography text. The values provided in the optional argument are meant for sample purposes. There is no need to include the width and height of an image in the optional argument for live articles. This is sample author biography text this is sample author biography text this is sample author biography text this is sample author biography text this is sample author biography text this is sample author biography text this is sample author biography text this is sample author biography text.}
% \end{biography}

% %% sample for biography without author's image
% \begin{biography}{}{\author{Author Name.} This is sample author biography text this is sample author biography text this is sample author biography text this is sample author biography text this is sample author biography text this is sample author biography text this is sample author biography text this is sample author biography text.}
% \end{biography}

%%%%%%%%%%%%%%

\begin{appendices}

\section{Proofs of propositions and theorems}
The propositions and theorems in this paper were formulated with the $\ell_\infty$-norm. Most of them also hold for general vector norms with corresponding vector-induced matrix norms. The proofs below do not specify the norm if they hold in this general setting.
\begin{proof}[Proof of Proposition~\ref{prop: truncation error vectorized}]
    The $k$'th iteration of Algorithm~\ref{alg: iterative algorithm vectorized} can alternatively be formulated as
    \begin{equation}
        \widehat{xT}^{(k)} = \sum_{\ell=0}^{k-1}T^\ell g.
    \end{equation}
    Because $||\hat{T}||<1$, a $\widehat{xT}$ can similarly be formulated as
    \begin{equation}\label{eq: geometric formulation xT}
        \widehat{xT} = \sum_{\ell=0}^{\infty}T^\ell g.
    \end{equation}
    With these expressions, it follows that
    \begin{equation*}
        \widehat{xT} - \widehat{xT}^{(k)} = \sum_{\ell=k}^\infty \hat{T}^\ell \hat{g} = \hat{T}^k\sum_{\ell=0}^\infty \hat{T}^\ell\hat{g}.
    \end{equation*}
    Since $||\hat{T}||<1$, the geometric series give that
    \begin{equation*}
        ||\widehat{xT} - \widehat{xT}^{(k)}|| \leq ||\hat{T}||^k\sum_{\ell=0}^\infty||\hat{T}||^\ell||\hat{g}|| = \frac{||\hat{g}||\ ||\hat{T}||^k}{1-||\hat{T}||}.
    \end{equation*}
\end{proof}

\begin{proof}[Proof of Proposition~\ref{prop: split of statistical error}]
    By definition, \eqref{eq: characteristic equation vectorized} holds for both $xT$ and $\widehat{xT}$. Consequently, 
    \begin{align*}
        ||xT-\widehat{xT}|| &\leq ||g - \hat{g}|| + ||T\cdot xT- \hat{T}\cdot \widehat{xT}||\\
        &\leq ||g - \hat{g}|| + ||T||\ ||xT - \widehat{xT}|| + ||(T-\hat{T})\widehat{xT}||.
    \end{align*}
    Because $||T||<1$, this is equivalent to \eqref{eq: split of statistical error}.
\end{proof}

\begin{proof}[Proof of Proposition~\ref{prop: statistical error goal vector}]
    Denote $\varepsilon=\sqrt{\frac{\log(2M/\alpha)}{2\bar{N}_s}}$. The Hoeffding's inequality states that 
    \begin{equation*}
        P(|g-\hat{g}(i)|\geq \varepsilon) \leq 2\exp(-2\bar{N}_s\varepsilon^2).
    \end{equation*}
    Consequently,
    \begin{align*}
        P(||g-\hat{g}||_\infty\geq \varepsilon) & =P\left(\max_{i=1,\ldots,M} |g(i)-\hat{g}(i)| \geq \varepsilon\right)\\
        &= P\left(\bigcup_{i=1,\ldots,M} \{|g(i)-\hat{g}(i)|\geq \varepsilon\}\right)\\
        &\leq \sum_{i=1}^M P(|g(i)-\hat{g}(i)|\geq \varepsilon)\\
        &\leq 2M \exp(-2\bar{N}_g\varepsilon^2)\\
        &=\alpha.
    \end{align*}
    Consequently, the inequality holds with probability $1-\alpha$. The equality of the bounds follows from the definition of $\bar{N}_g$.
\end{proof}

\begin{proof}[Proof of Proposition~\ref{prop: statistical error transition matrix}]
    Denote $\varepsilon=M\sqrt{\frac{\log(2M^2/\alpha)}{2\bar{N}_m}}$. Note that Hoeffding's inequality gives  
    \begin{equation*}
        P(|T_{i,j} - \hat{T}_{i,j}| \geq \varepsilon/M) \leq 2\exp(-2\bar{N}_m\varepsilon^2/M^2)
    \end{equation*}
    Then,
    \begin{align*}
        P(||T-\hat{T}||_\infty\geq \varepsilon) &\leq P\left(M \max_{i,j=1\ldots,M} |T_{i,j}-\hat{T}_{i,j}|\geq \varepsilon\right)\\
        &=P\left(\bigcup_{i,j=1,\ldots,M} M|T_{i,j}-\hat{T}_{i,j}|\geq \varepsilon\right)\\
        &\leq \sum_{i,j=1,\ldots,M} P(|T_{i,j}-\hat{T}_{i,j}|\geq \varepsilon/M)\\
        &\leq \sum_{i,j=1,\ldots,M} 2\exp(-2\bar{N}_m\varepsilon^2/M^2)\\
        &= 2M^2\exp(-2\bar{N}_m\varepsilon^2/M^2)\\
        &=\alpha.
    \end{align*}
    Consequently, the inequality holds with probability $1-\alpha$. The equality of the bounds follows from the definition of $\bar{N}_m$.
\end{proof}

\begin{proof}[Proof of Theorem~\ref{thm: estimation error}]
    The inequality follows directly from combining Proposition~\ref{prop: truncation error vectorized}, Proposition~\ref{prop: split of statistical error}, Proposition~\ref{prop: statistical error goal vector}, and Proposition~\ref{prop: statistical error transition matrix}. Hereby, it is used that $||(T-\hat{T})\widehat{xT}||_\infty\leq ||T-\hat{T}||_\infty$.
\end{proof}

\end{appendices}

\end{document}